\newtheorem{theorem}{Theorem}
\newtheorem{lem}{Lemma}
\newtheorem{cor}{Corollary}
\newcommand{\SNR}{\mathrm{SNR}}
\newcommand{\SINR}{\mathrm{SINR}}
\newcommand{\rmin}{r_{\mathrm{min}}}
\newcommand{\rmax}{r_{\mathrm{max}}}
\newcommand{\rEarth}{r_{\oplus}}
\newcommand{\Prob}{{\mathbb{P}}}
\newcommand{\Pc}{P_\mathrm{c}}
\newcommand{\PI}{P_\mathrm{I}}
\newcommand{\ps}{p_\mathrm{s}}
\newcommand{\ppi}{p_\mathrm{i}}
\newcommand{\phiu}{\phi_\mathrm{u}}
\newcommand{\phii}{\phi_\mathrm{i}}
\newcommand{\NI}{N_\mathrm{I}}
\newcommand{\nI}{n_\mathrm{I}}
\newcommand{\Neff}{N_\mathrm{eff}}
\definecolor{orcidlogocol}{HTML}{A6CE39}
\tikzset{
  orcidlogo/.pic={
    \fill[orcidlogocol] svg{M256,128c0,70.7-57.3,128-128,128C57.3,256,0,198.7,0,128C0,57.3,57.3,0,128,0C198.7,0,256,57.3,256,128z};
    \fill[white] svg{M86.3,186.2H70.9V79.1h15.4v48.4V186.2z}
                 svg{M108.9,79.1h41.6c39.6,0,57,28.3,57,53.6c0,27.5-21.5,53.6-56.8,53.6h-41.8V79.1z M124.3,172.4h24.5c34.9,0,42.9-26.5,42.9-39.7c0-21.5-13.7-39.7-43.7-39.7h-23.7V172.4z}
                 svg{M88.7,56.8c0,5.5-4.5,10.1-10.1,10.1c-5.6,0-10.1-4.6-10.1-10.1c0-5.6,4.5-10.1,10.1-10.1C84.2,46.7,88.7,51.3,88.7,56.8z};
  }
}
\newcommand\orcidicon[1]{\href{https://orcid.org/#1}{\mbox{\scalerel*{
\begin{tikzpicture}[yscale=-1,transform shape]
\pic{orcidlogo};
\end{tikzpicture}
}{|}}}}
\title{Downlink Coverage and Rate Analysis of\\Low Earth Orbit Satellite Constellations\\Using Stochastic Geometry}
\author{\IEEEauthorblockN{Niloofar Okati \orcidicon{0000-0002-8074-5146}, Taneli Riihonen \orcidicon{0000-0001-5416-5263},~\IEEEmembership{Member,~IEEE,} Dani Korpi \orcidicon{0000-0003-3460-7436}, Ilari Angervuori, and Risto Wichman}%
\thanks{Manuscript received October 30, 2019; revised March 4, 2020; accepted April 17, 2020.
The work of N.~Okati and T.~Riihonen was supported by a Nokia University Donation.
The associate editor coordinating the review of this paper and approving it for publication was S.~Durrani. {\em (Corresponding author: Niloofar Okati.)}}%
\thanks{N.~Okati and T.~Riihonen are with Unit of Electrical Engineering, Faculty of Information Technology and Communication Sciences, Tampere University, FI-33720 Tampere, Finland (e-mail: niloofar.okati@tuni.fi; taneli.riihonen@tuni.fi).}%
\thanks{D. Korpi is with Nokia Bell Labs, Karaportti 3, FI-02610 Espoo, Finland (e-mail: dani.korpi@nokia-bell-labs.com).}%
\thanks{I. Angervuori and R. Wichman are with Department of Signal Processing and Acoustics, Aalto University School of Electrical Engineering, FI-00076 Espoo, Finland (e-mail: ilari.angervuori@aalto.fi; risto.wichman@aalto.fi).}
\thanks{Digital Object Identifier 10.1109/TCOMM.2020.2990993}
}
\begin{document}

\markboth{IEEE Transactions on Communications}%
{Okati \MakeLowercase{\textit{et al.}}: Downlink Coverage and Rate Analysis of LEO Satellite Constellations Using Stochastic Geometry}

\maketitle

\begin{abstract}
As low Earth orbit (LEO) satellite communication systems are gaining increasing popularity, new theoretical methodologies are required to investigate such networks' performance at large. This is because deterministic and location-based models that have previously been applied to analyze satellite systems are typically restricted to support simulations only.
In this paper, we derive analytical expressions for the downlink coverage probability and average data rate of generic LEO networks, regardless of the actual satellites' locality and their service area geometry. Our solution stems from stochastic geometry, which abstracts the generic networks into uniform binomial point processes. Applying the proposed model, we then study the performance of the networks as a function of key constellation design parameters. Finally, to fit the theoretical modeling more precisely to real deterministic constellations, we introduce the effective number of satellites as a parameter to compensate for the practical uneven distribution of satellites on different latitudes.
In addition to deriving exact network performance metrics, the study reveals several guidelines for selecting the design parameters for future massive LEO constellations, e.g., the number of frequency channels and altitude.
\end{abstract}

\begin{IEEEkeywords}
Low Earth orbit (LEO) constellations, massive communication satellite networks, coverage probability, average achievable rate, SINR, stochastic geometry, point processes.
\end{IEEEkeywords}

\IEEEpeerreviewmaketitle

\section{Introduction} 
\IEEEPARstart{T}{he} challenge for providing affordable Internet coverage everywhere around the world requires novel solutions for ubiquitous connectivity. An emerging technology, which can provide the infrastructure with relatively low propagation delay compared to conventional geostationary satellites and seamless connectivity also at polar regions, is massive low Earth orbit (LEO) satellite networking. Many LEO constellations, e.g., Kuiper, LeoSat, OneWeb, Starlink, Telesat, etc., have secured investors or, even, are already launching pilot satellites. While commercial plans obviously must have been simulated thoroughly before putting forward, general theoretical understanding on the performance of massive LEO communication constellations at large is still missing in the scientific literature.

In this paper, we apply stochastic geometry to acquire analytical tractable expressions for coverage probability and average data rate of downlink LEO networks. The approach formulated herein for the first time ever to the authors' knowledge paves the way to study the generic performance of satellite networking without relying on explicit orbit simulations and the actual geometry of any specific constellation.

\subsection{Related Works}
In \cite{1}, the uplink outage probability in the presence of interference was evaluated for two LEO constellations through time-domain simulations. A performance study of Iridium constellation was presented in \cite{4} in terms of system capacity, the average number of beam-to-beam handoffs and satellite-to-satellite handoffs, the channel occupancy distribution and average call drop probability.
The effect of traffic non-uniformity on signal-to-interference ratio was studied in \cite{38} by assuming hexagonal service areas for satellites. 
A teletraffic analysis of a mobile satellite system based on a LEO constellation was performed in \cite{5}. A general expression for a single LEO satellite's visibility time is provided in \cite{8}, but it is incapable of concluding the general distribution of visibility periods for any arbitrarily positioned user.

\begin{figure*}[t]
    \includegraphics[trim=1.5cm 0.7cm 1.5cm 17cm,width=\textwidth]{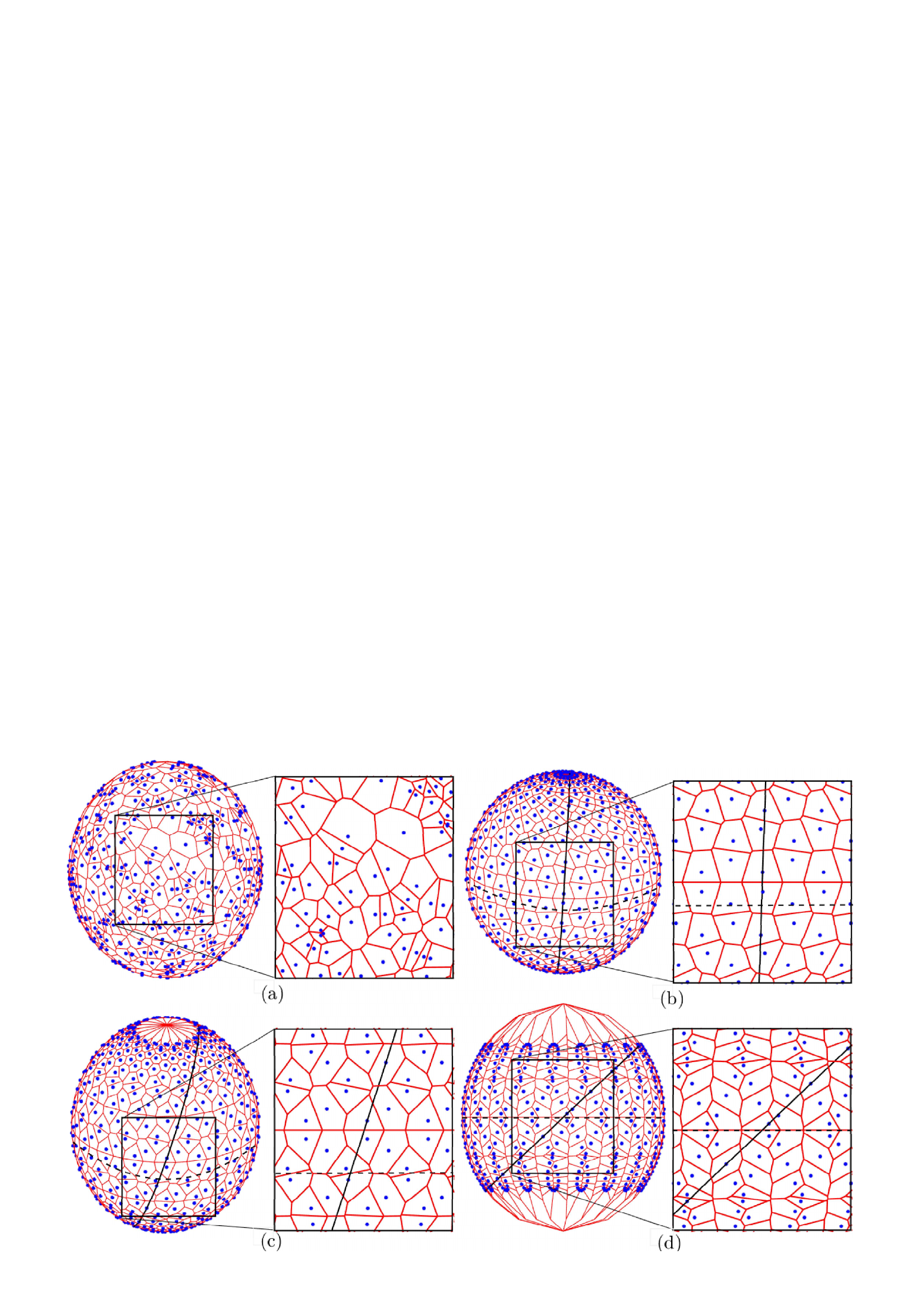}
    \caption{{Example orbits and} spherical Voronoi diagrams representing the coverage areas of the nearest satellites with (a) a random constellation, (b) a polar constellation with $87.9^{\circ}$ inclination angle, (c) an inclined constellation with $70^{\circ}$ inclination angle, and (d) an inclined constellation with $40^{\circ}$ inclination angle.}
    \label{fig:spherical_Voronoi_examples}
\end{figure*}

Stochastic geometry is a powerful mathematical
and statistical tool for the modeling, analysis, and design of wireless networks with irregular topologies \cite{24,blaszczyszyn2018stochastic,25}. It has been mostly used in the literature to analyze two-dimensional (planar) terrestrial networks  \cite{23,24,25,9,10,11,12,28,blaszczyszyn2018stochastic}. In \cite{23}, a comprehensive review of the literature related to
the stochastic geometry modeling of multi-tier and cognitive cellular networks was presented. {In \cite{9}, the Poisson point process (PPP) was observed to provide lower bounds on the coverage probability and the average transmission rate which are as tight as an upper bound provided by an idealized grid-based model.
The work in \cite{9} was extended to multi-tier networks under quality-of-service constraints in \cite{10, 11}. Uplink coverage when base stations and users follow independent PPPs is evaluated in \cite{12}.}

While stochastic geometry has been extensively applied for the analysis of planar scenarios, {its application to three-dimensional networks has started attracting significant attention recently \cite{19,20}.}
In \cite{19}, modeling and analysis of coverage in three-dimensional cellular networks have been investigated using a PPP model.
{Although PPP has provided tractable and insightful results, it is not valid for modeling a finite-area network with a limited number of nodes \cite{41}. For such cases, binomial point process (BPP) is an appropriate model to capture the characteristics of the network \cite{42,43}. The performance for an arbitrarily shaped planar network is studied in \cite{42,44}, but the reference transmitter is assumed to be located at a given distance, which can be overcome with two transmitter selection policies from \cite{28}. A BPP modeling for a finite three-dimensional network of unmanned aerial vehicles (UAVs) was developed in \cite{20,45}.}

Tools from stochastic geometry have been applied to study satellite communications \cite{30,39,31,6} only with a limited extent. {For instance, in \cite{30}, stochastic geometry was used to model the locations of terrestrial users involving multi-beam satellite and terrestrial interference from cellular base stations.} The authors in \cite{39} derived coverage probability and data rate of a multi-UAV downlink network through assuming PPP distribution for users where, however, no interference among users and UAVs is considered due to their channel assignment policy. In \cite{31}, the performance of a cognitive satellite--terrestrial network is investigated; the secondary terrestrial network and users are modeled as independent point processes and share resources with a primary satellite system. An analysis of coverage times during LEO satellite visits has been conducted in \cite{6} by inclusion of the distribution of users' positions.

We can conclude from the above that the utilization of stochastic geometry for satellite network analysis is limited to modeling the user's locality with a low number of satellites. Alternative methods to stochastic geometry are unable to provide a comprehensive analysis that fits to any arbitrary constellation due to being associated with some specific network design parameters. In these models, coverage footprints of satellites are assumed to be identical and typically form a regular circular or hexagonal grid, although uneven distribution of satellites along different latitudes, and differences in transmitted power and/or altitude, create irregular cells. Moreover, there is no analytical tool to model interference in a generic sense.

\subsection{Contributions and Organization of the Paper}

{ Figure~1(a) demonstrates a random constellation in which a set of satellites is distributed on a sphere according to a uniform point process, while Figs.~1(b)--(d) show a class of regular deterministic Walker constellations with varying inclination angle, in which all satellites are evenly spaced and have the same period and inclination. Each cell represents the coverage area of a satellite wherein it is the closest to and, thus, serving all users located inside the polygon. The set of all coverage cells form a Voronoi tessellation which we can observe to be analogous in the zoomed regions for the random constellation and deterministic Walker constellations.
While selecting the inclination angle in practice is based on the service area of interest, we see that smaller inclination results in a more irregular Voronoi tessellation.
Furthermore, the inter-operation of multiple LEO constellations, as it might hold in the near future, will make satellites' mutual positions even more similar to those given by a random process.}

Motivated by the above observations, in this paper, we first model the satellite constellation with an appropriate point process, which will then allow us to utilize the tools from stochastic geometry to analyze the performance of generic LEO networks in theory. We consider a network of a given number of satellites whose locations are modeled as a BPP on a sphere at a fixed altitude, which is justifiable due to the limited number of satellites covering a given finite region~\cite{34}.  
Users are located at some arbitrary locations on Earth and are associated with the nearest satellite while some other satellites above the horizon of a user can cause co-channel interference due to frequency reuse. {However, since satellites in Walker constellations are distributed unevenly along different latitudes, i.e., the number of satellites is effectively larger on the inclination limit of the constellation than on the equatorial regions, the density of practical deterministic constellations is typically not uniform like with BPP modeling. Thus, we apply a new parameter in order to compensate for the uneven density w.r.t.\ practical Walker constellations and create a tight match between the results generated by BPP modeling and those from practical constellation simulations.}

Based on the modeling summarized above, we present the following scientific contributions in this paper.
\begin{itemize}
	
\item 
We derive exact expressions for coverage probability and average achievable data rate of a user in terms of the Laplace transform of interference power distribution.

\item 
We validate our novel theoretical results with numerical simulations and also compare them with reference results from actual deterministic satellite constellations.
 
\item 
To suppress the performance mismatch between a random network and practical constellations, which mostly stems from uneven distribution of satellites along different latitudes, we define and calculate a new parameter, the effective number of satellites, for every user latitude.  

\item {We show that, with the above compensation, the generic performance of large deterministic constellations can be very accurately analyzed with theoretical expressions that are based on stochastic constellation geometry.}

\item Finally, the two objectives of coverage and data rate are evaluated for different key design parameters, e.g., the number of frequency channels and satellite altitude. 
\end{itemize}

 As for propagation models, we consider two extreme cases, namely Rayleigh fading and static propagation, for the serving channels. The former corresponds to a more drastic fading environment when the received signal is subject to severe multi-path distortion due to the small elevation angle of the transmitting satellite; this case leads to simpler expressions for some specific path loss exponents. The latter represents the typical cases where the serving satellite's elevation is large enough to provide line-of-sight to the user and, thus, weak components from multi-path propagation become insignificant. For interfering channels, any fading statistics can be adopted in general since this has no effect on analytical tractability.

We see that, although increasing the number of frequency bands improves the coverage probability, { there is an optimal number of frequency channels that maximizes the data rate depending on the path loss exponent. Assuming a constant effective number of satellites for different altitudes,} we observe that the optimum height which maximizes coverage probability or data rate is not within the practical altitude range of LEO networks (i.e., outside Earth's atmosphere), where the performance always declines with increasing the altitude.

The organization of the remainder of this paper is as follows. Section~II describes the system model for a randomly distributed satellite network and characterizes some baseline probabilities stemming from the stochastic geometry of the system. As for the main results, we derive analytical expressions for downlink coverage probability and average achievable data rate for a terrestrial user in Sections~III and IV, respectively. Numerical results are provided in Section~V for studying the effect of key system parameters such as satellite altitude and the number of frequency bands allocated for the network. Finally, we conclude the paper in Section~VI.

\begin{figure}[t]
 \includegraphics[trim=6.2cm 11cm 6.2cm 12.7cm,width=\textwidth]{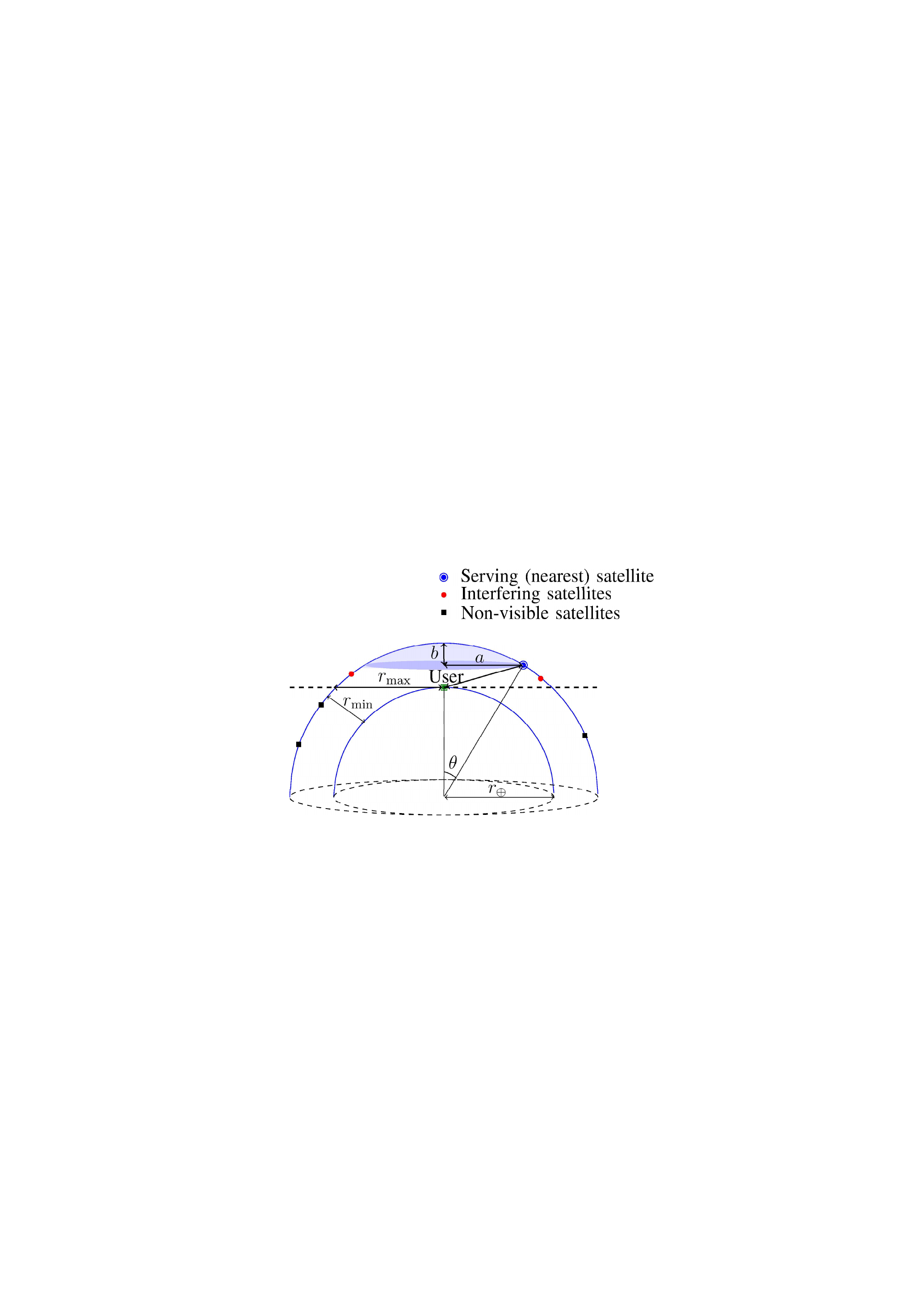}
\caption{A sketch of the considered system's stochastic geometry, where satellites are distributed randomly on a sphere with radius $\rEarth+\rmin$ and a user is located on the surface of another cocentric sphere with radius $\rEarth$.}
\label{fig:1} 
\end{figure}
\section{System Model}

Let us consider a downlink network of $N$ satellite base stations that are uniformly distributed around Earth at the same altitude $\rmin$ forming a BPP on a sphere with radius $\rEarth+\rmin$, where $\rEarth$ denotes Earth's radius, as shown in Fig.~2, while user terminals are located on the surface of Earth. The altitude parameter $\rmin$ specifies also the minimum possible distance from a satellite to a user (that is realized when it is directly above the user), hence the name. We assume that wireless transmissions propagate to a user from all and only the satellites that are above its horizon. Correspondingly, $\rmax=\sqrt{2\rEarth\rmin+\rmin^2}$ denotes the maximum possible distance at which a satellite has any effect on the network service to a user (that is realized when the satellite is at the horizon). The notation followed in this paper is summarized in Table~I.

Each user is associated to the nearest satellite that is referred to as the serving satellite in what follows, resulting in spherical Voronoi tessellation for satellites' coverage areas as illustrated in Fig.~\ref{fig:spherical_Voronoi_examples}(a). The other satellites may cause co-channel interference to the user because we assume that there are only $K$, with $K\leq N$, orthogonal frequency channels for the network and randomly assign a subset of $N/K$ satellites to each channel. The scheduling performed in this way ensures that the nearest satellite in the constellation uses the channel that is assigned to a user. All the other satellites on the same channel, other than the serving satellite, cause interference to the user's reception whenever they are above the horizon.

\begin{table*}[!t]
\label{tab:1}
\begin{tabular}{c|cc}
\hline
\hline
\textbf{Notation}&\textbf{Description}\\
\hline
$\rEarth; \rmin $&Earth radius ($6371$ km); Altitude of satellites\\
\hline
$R_0; R_n; R$&Serving distance; Distance to the $n^{\mathrm{th}}$ interfering satellite; Distance from the user to any satellite\\
\hline
$G_0; G_n$&Channel fading gain of the serving link; Channel fading gain of the $n^{\mathrm{th}}$ interfering link\\
\hline
$N; K; \NI$&Total number of satellites; Total number of channels; Number of interfering satellites\\
\hline
$\ps $& Transmission power from the serving satellite\\
\hline
$\ppi $&Transmission power from interfering satellites \\
\hline
 $\sigma^2$&Additive noise power\\
\hline
 $\alpha$&Path loss exponent\\
\hline
 $T$&SINR threshold\\
\hline
 $\Pc; \bar{C}$&Coverage probability; Average achievable rate\\
\hline
\hline
\end{tabular}
\caption{Summary of mathematical notation}
\end{table*}
{We suppose that every satellite may be equipped with a directional antenna that radiates its main lobe towards the center of Earth. From the serving satellite, some lobe containing the higher power (but not necessarily the main lobe) is likely directed to the user, while the lobes with lower power levels radiate towards the user from the interfering satellites. In order to approximate the effect of directional transmission, we set differing power levels for serving and interfering satellites which are denoted by $\ps$ and $\ppi$, respectively, such that $\ppi\leq\ps$. If the user is located within the beamwidth of the serving satellite, the transmitted power from it will have the main lobe power level. Considering the main lobe transmitted power from the serving satellite and the largest sidelobe transmitted power from the interfering satellites, $\ps/\ppi$ corresponds to the sidelobe level of the antennas.}

The distances from the user to the serving satellite and the other satellites are denoted by random variables $R_0$ and $R_n$, \mbox{$n=1,2,\ldots,N-1$}, respectively, while $G_0$ and $G_n$ represent the corresponding channel gains. Obviously, $G_n=0$ if $R_n > \rmax$ for some $n=0,1,\ldots,N-1$, i.e., the satellite is below horizon. For notational convenience, when $\NI>0$, we let indices $n=1,2,\ldots,\NI$ correspond to those $\NI \leq N/K-1$ satellites (if any) that share the same frequency channel with the serving satellite and are above the user's horizon, so that they cause co-channel interference.

Based on the above modeling, the signal-to-interference-plus-noise ratio (SINR) at the receiver can be expressed as
\begin{align}
\label{eq:1}
\SINR = \frac{\ps G_0 R_0^{-\alpha}}{I+\sigma^2},
\end{align}
where we assume that the user's receiver is subject to additive noise with constant power $\sigma^2$, the parameter $\alpha$ is a path loss exponent,
\begin{align}
\label{eq:2}
I = \sum_{n=1}^{\NI} \ppi G_n R_n^{-\alpha}
\end{align}
is the cumulative interference power from all other satellites above the user's horizon than the serving satellite, and $\NI$ is a random variable denoting the number of interfering satellites. In the special case of having no interfering satellites, i.e., \mbox{$\NI = 0$}, since they happen to be below the user's horizon, the SINR in \eqref{eq:1} is reduced to signal-to-noise ratio (SNR) as
\begin{align}
\label{eq:41}
\SNR = \frac{\ps G_0 R_0^{-\alpha}}{\sigma^2},
\end{align}
and further $\SNR=0$ if $R_0 > \rmax$, i.e., also the nearest satellite that is supposed to be the serving satellite is below horizon.

In order to contribute expressions for coverage probability and average achievable rate in the following sections, we first need to characterize some basic distance distributions that stem from the stochastic geometry of the considered system. In particular, we express the necessary cumulative distribution function (CDF) and probability density functions (PDFs) in the following three lemmas.

\begin{lem}
The CDF of the distance $R$ from any specific one of the satellites in the constellation to the user is given by
\begin{equation}
\label{eq:9}
F_{R}\left(r\right) \triangleq \Prob\left(R \leq r\right)=\left\{
\begin{array}{ll}
0, & r<\rmin,\\
\frac{r^2-\rmin^2}{4\rEarth(\rEarth+\rmin)}, &\rmin\leq r \leq2\rEarth+\rmin,\\
1,&r>2\rEarth+\rmin,\\
\end{array} \right.
\end{equation}
and the corresponding PDF is given by  
\begin{equation}
\label{eq:10}
f_{R}\left(r\right) = \frac{r}{2\rEarth(\rEarth+\rmin)}
\end{equation}
for $\rmin\leq r \leq 2\rEarth+\rmin$ while $f_{R}\left(r\right)=0$ otherwise.
\end{lem}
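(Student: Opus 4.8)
The plan is to exploit the rotational symmetry of the setup. Since the statement concerns the distance from \emph{one specific} satellite, the positions of the remaining $N-1$ satellites are irrelevant, and the only fact we use about the BPP is that each individual satellite is marginally uniform on the sphere of radius $\rEarth+\rmin$. Placing the user, without loss of generality, at the ``north pole'' of the concentric sphere of radius $\rEarth$, let $\theta$ be the central angle between the user's radial direction and the satellite's radial direction. The law of cosines gives the deterministic relation $R^2 = \rEarth^2 + (\rEarth+\rmin)^2 - 2\rEarth(\rEarth+\rmin)\cos\theta$, so all the randomness in $R$ is inherited from $\theta$ alone.

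The key step is the classical fact (Archimedes' hat-box theorem: the area of a spherical cap is affine in its height) that for a uniformly random point on a sphere, $\cos\theta$ is uniformly distributed on $[-1,1]$; this follows in one line from the surface element $2\pi(\rEarth+\rmin)^2\sin\theta\,d\theta$. Setting $U=\cos\theta\sim\mathrm{Unif}[-1,1]$, the map $U\mapsto R^2$ is affine and decreasing, so I would compute $F_R(r)=\Prob(R\le r)=\Prob\!\left(U\ge u_0(r)\right)$ with $u_0(r)=\frac{\rEarth^2+(\rEarth+\rmin)^2-r^2}{2\rEarth(\rEarth+\rmin)}$, and then use $\Prob(U\ge u_0)=\frac{1-u_0}{2}$ valid for $u_0\in[-1,1]$. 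Substituting and simplifying the numerator as $2\rEarth(\rEarth+\rmin)-\rEarth^2-(\rEarth+\rmin)^2+r^2=r^2-\bigl((\rEarth+\rmin)-\rEarth\bigr)^2=r^2-\rmin^2$ yields exactly $\frac{r^2-\rmin^2}{4\rEarth(\rEarth+\rmin)}$.

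It then remains to pin down the support and clamp the CDF accordingly: $u_0(r)=1$ corresponds to the satellite directly overhead and gives $R=\rmin$, while $u_0(r)=-1$ corresponds to the satellite antipodal to the user and gives $R=2\rEarth+\rmin$; for $r$ outside $[\rmin,\,2\rEarth+\rmin]$ the probability is $0$ or $1$, respectively. Differentiating the middle branch with respect to $r$ immediately produces the PDF $f_R(r)=\frac{r}{2\rEarth(\rEarth+\rmin)}$ on that interval, and $0$ otherwise. I do not anticipate any real obstacle here; the only point demanding a word of care is the uniform-cosine fact, which I would either cite or dispatch with the one-line area-element argument above.
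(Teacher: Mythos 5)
Your proof is correct and is essentially the paper's own argument in a different parametrization: Appendix~A likewise reduces everything to the affine dependence of $R^2$ on a uniformly distributed spherical-cap quantity (there the cap area $A_{\text{cap}}$, which is uniform for the same Archimedes hat-box reason your $\cos\theta$ is), reads off the CDF, and differentiates. Your law-of-cosines route through $U=\cos\theta\sim\mathrm{Unif}[-1,1]$ is, if anything, slightly more direct than the paper's chain through $A_{\text{cap}}=\pi\left(a^2+b^2\right)$ and the cap-area formula $2\pi(\rEarth+\rmin)^2(1-\cos\theta)$, but the key fact and the resulting algebra are the same.
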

\begin{proof}
See Appendix A.
\end{proof}

{The above distribution of $R$ will be a key tool in deriving other distance statistics (i.e., those of the serving and interfering satellites), which are required in turn to characterize the distribution of SINR in \eqref{eq:1} using stochastic geometry.}

\begin{lem}
The PDF of the serving distance $R_0$ is given by 
\begin{align}
\label{eq:11}
f_{R_0}\left(r_0\right) = N\left(1-\frac{r_0^2-\rmin^2}{4\rEarth(\rEarth+\rmin)}\right)^{N-1} \frac{r_0}{2\rEarth(\rEarth+\rmin)}
\end{align}
for $\rmin\leq r_0 \leq 2\rEarth+\rmin$ while $f_{R_0}\left(r_0\right)=0$ otherwise.
\end{lem}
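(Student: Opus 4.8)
The plan is to identify the serving distance $R_0$ as the smallest among $N$ independent distances and then apply the standard order-statistic (sample-minimum) argument, feeding in Lemma~1 as the single-satellite input.

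First I would spell out the independence structure. Because the $N$ satellites form a uniform BPP on the sphere of radius $\rEarth+\rmin$, their positions are i.i.d.\ and uniform, so the distances from the user to the $N$ satellites are i.i.d.\ random variables, each with the CDF $F_R$ of \eqref{eq:9} and PDF $f_R$ of \eqref{eq:10} from Lemma~1. Since, by definition, the serving satellite is the one nearest to the user, $R_0$ is exactly the minimum of these $N$ distances.

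Next I would compute the distribution of this minimum. For $\rmin\le r_0\le 2\rEarth+\rmin$,
\[
\Prob(R_0>r_0)=\Prob(\text{all }N\text{ satellites lie farther than }r_0)=\bigl(1-F_R(r_0)\bigr)^{N},
\]
hence $F_{R_0}(r_0)=1-\bigl(1-F_R(r_0)\bigr)^{N}$. Differentiating in $r_0$ gives
\[
f_{R_0}(r_0)=N\bigl(1-F_R(r_0)\bigr)^{N-1}f_R(r_0),
\]
and inserting the closed forms from Lemma~1 produces precisely \eqref{eq:11}; outside $[\rmin,2\rEarth+\rmin]$ the density vanishes because $F_R$ is constant ($0$ or $1$) there, so the support claim is immediate.

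The one point worth a remark --- rather than a genuine obstacle --- is that even when the nearest satellite happens to be below the user's horizon (so $R_0>\rmax$ and the link carries no signal), $R_0$ is still a well-defined random variable as long as $N\ge1$. Its support therefore remains the full interval $[\rmin,2\rEarth+\rmin]$ and is not truncated at $\rmax$, which is why \eqref{eq:11} holds up to $r_0=2\rEarth+\rmin$. Apart from this bookkeeping, the proof is just the classical derivation of a sample minimum's law, with the only nontrivial ingredient supplied by Lemma~1.
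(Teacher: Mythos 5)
Your proof is correct and follows essentially the same route as the paper: identify $R_0$ as the minimum of $N$ i.i.d.\ distances, write $F_{R_0}(r_0)=1-\bigl(1-F_R(r_0)\bigr)^{N}$, and differentiate using Lemma~1. Your added remarks on the support and on the case $R_0>\rmax$ are sound bookkeeping that the paper leaves implicit.
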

\begin{proof}
Due to the channel assignment by which the serving satellite is the nearest one among all the $N$ i.i.d.\ satellites, the CDF of $R_0$ can be expressed as $F_{R_0}\left(r_\text{0}\right)\triangleq\Prob\left(R_0 \leq r_\text{0}\right) = 1 - \left(1-F_{R}\left(r_\text{0}\right)\right)^N$ with the substitution of (\ref{eq:9}). The corresponding PDF is obtained by differentiation to complete the proof.
\end{proof}

\begin{lem}
When conditioned on the serving distance such that $R_0=r_0$, the PDF of the distance from any other satellite to the user is given by
\begin{align}
\label{eq:14}
f_{R_n|R_0}\left(r_n|r_0\right)=\frac{f_{R}\left(r_n\right)}{1-F_{R}\left(r_0\right)}
\end{align}
for $r_0< r_n\leq 2\rEarth+\rmin$ while $f_{R_n|R_0}\left(r_n|r_0\right)=0$ otherwise.
\end{lem}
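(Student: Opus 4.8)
The plan is to exploit the i.i.d.\ structure of the BPP: conditioning on $R_0 = r_0$ says that the nearest of the $N$ satellites sits at distance $r_0$, which is equivalent to saying that one particular (by symmetry, arbitrary) satellite lies at distance $r_0$ while every other satellite lies at distance strictly greater than $r_0$. Because the $N$ satellite locations are mutually independent and identically distributed, once we fix which one realizes this minimum, the remaining $N-1$ distances are again i.i.d., each now distributed as $R$ conditioned on the event $\{R > r_0\}$. Hence the target density is nothing but the distribution of $R$ from Lemma~1 truncated to the interval $(r_0,\,2\rEarth+\rmin]$.

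Concretely, first I would write the conditional survival function $\Prob\left(R_n > r_n \mid R_0 = r_0\right) = \Prob\left(R > r_n \mid R > r_0\right) = \frac{1-F_{R}(r_n)}{1-F_{R}(r_0)}$ for $r_0 < r_n \le 2\rEarth+\rmin$, using the CDF \eqref{eq:9}; equivalently $F_{R_n|R_0}(r_n|r_0) = \frac{F_{R}(r_n)-F_{R}(r_0)}{1-F_{R}(r_0)}$. Differentiating with respect to $r_n$ and substituting the PDF \eqref{eq:10} then gives at once $f_{R_n|R_0}(r_n|r_0) = \frac{f_{R}(r_n)}{1-F_{R}(r_0)}$, while the density vanishes outside $r_0 < r_n \le 2\rEarth+\rmin$ since $F_{R}$ is constant there. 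The denominator $1-F_{R}(r_0)$ is strictly positive on the relevant range because $r_0 \le 2\rEarth+\rmin$, so no degeneracy occurs.

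The step requiring care — and the one I expect to be the main obstacle — is justifying rigorously that conditioning on the measure-zero event that the minimum equals $r_0$ (attained by a given satellite) leaves the other $N-1$ distances i.i.d.\ and truncated. I would make this precise with a standard order-statistics computation: by exchangeability it suffices to condition on satellite $0$ being the minimizer, and on that event the joint density of $(R_0,R_1,\ldots,R_{N-1})$ equals $f_{R}(r_0)\prod_{j=1}^{N-1} f_{R}(r_j)\,\mathbf{1}\{r_j > r_0\}$; its $R_0$-marginal is $f_{R}(r_0)\bigl(1-F_{R}(r_0)\bigr)^{N-1}$, which agrees with \eqref{eq:11} up to the combinatorial factor $N$ accounting for the choice of minimizer. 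Dividing the joint density by this marginal cancels $f_{R}(r_0)$ and collapses the product into independent truncated factors $f_{R}(r_j)/(1-F_{R}(r_0))$ on $\{r_j > r_0\}$; marginalizing out all but one of them yields exactly \eqref{eq:14}, and since the argument is symmetric in the satellite index, the claim holds for the distance from any other satellite to the user.
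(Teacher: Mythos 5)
Your proposal is correct and follows essentially the same route as the paper: write the conditional CDF as $\bigl(F_{R}(r_n)-F_{R}(r_0)\bigr)/\bigl(1-F_{R}(r_0)\bigr)$ and differentiate in $r_n$. The only difference is that you additionally justify, via the order-statistics factorization of the joint density, why conditioning on the nearest satellite being at $r_0$ leaves the remaining $N-1$ distances i.i.d.\ and truncated to $(r_0,\,2\rEarth+\rmin]$ --- a step the paper simply asserts --- so your write-up is, if anything, more complete than the original.
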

\begin{proof}
The CDF of $R_n|_{R_0=r_0}$ is obtained by conditioning $R$ on $R_0$ as follows:
\begin{align}
\label{eq:25}
\nonumber
F_{R_n|R_0}(r_n|r_0)&\triangleq\Prob\left(R_n<r_n|R_0=r_0\right)\\
&=\frac{\Prob\left(r_0 \leq R \leq r_n\right)}{\Prob(R>r_0)} = \frac{F_{R}(r_n)-F_{R}(r_0)}{1-F_{R}(r_0)}.
\end{align}

The proof is then completed by taking the derivative of the CDF with respect to $r_n$ for obtaining the PDF in (\ref{eq:14}).
\end{proof}

Furthermore, other auxiliary performance factors, which will be shortly used in analyzing coverage probability and average achievable data rate, are the number of interfering satellites $\NI$ and the probability $P_0$ of having an interference-free situation ($\NI=0$) when $R_0=r_0$. They will be expressed in the following lemma and its corresponding corollary.

\begin{lem}
When the serving satellite is at distance $r_0 \geq \rmin$ from the user, the number of interfering satellites, denoted by $N_I$, is a binomial random variable with success probability
\begin{align}
\label{eq:46}
\PI = \frac{\rmin-\big(r_0^2-\rmin^2\big)/(2\rEarth)}{2(\rEarth+\rmin)-\big(r_0^2-\rmin^2\big)/(2\rEarth)}.
\end{align}
\end{lem}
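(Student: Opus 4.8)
The plan is to condition on the serving distance $R_0 = r_0$ and exploit both the i.i.d.\ (BPP) structure of the satellite positions and the fact that the assignment of satellites to the $K$ channels is performed uniformly at random, independently of the geometry. Given $R_0 = r_0$, all $N-1$ non-serving satellites lie at distance larger than $r_0$, and, by the standard memoryless property of the minimum of i.i.d.\ random variables, their distances are conditionally i.i.d.\ with the truncated density $f_{R_n|R_0}(\cdot|r_0)$ of Lemma~3. Since which $N/K-1$ of these non-serving satellites share the serving channel is chosen independently of all distances, and since the non-serving satellites are exchangeable, the $N/K-1$ co-channel satellites also have conditionally i.i.d.\ distances distributed according to Lemma~3.

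Next I would note that a co-channel satellite contributes to the interference sum in \eqref{eq:2} precisely when it is above the user's horizon, i.e.\ when its distance does not exceed $\rmax$. Using Lemma~3, the probability of this event for a single co-channel satellite equals
\[
\PI = \Prob\bigl(R_n \le \rmax \mid R_0 = r_0\bigr) = \int_{r_0}^{\rmax} \frac{f_{R}(r)}{1-F_{R}(r_0)}\,\mathrm{d}r = \frac{F_{R}(\rmax)-F_{R}(r_0)}{1-F_{R}(r_0)} .
\]
Because the $N/K-1$ co-channel satellites are conditionally independent and each is above the horizon with this same probability $\PI$, the number $\NI$ of them that are --- which is exactly the number of interfering satellites --- is, by definition, binomial with $N/K-1$ trials and success probability $\PI$.

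It then only remains to put $\PI$ into closed form. Substituting $\rmax^2 = 2\rEarth\rmin+\rmin^2$ into the CDF \eqref{eq:9} gives $F_{R}(\rmax) = \rmin/\bigl(2(\rEarth+\rmin)\bigr)$; inserting this and $F_{R}(r_0) = (r_0^2-\rmin^2)/\bigl(4\rEarth(\rEarth+\rmin)\bigr)$ into the expression above, cancelling the common factor $4\rEarth(\rEarth+\rmin)$ from the numerator and denominator, and finally dividing through by $2\rEarth$, yields exactly \eqref{eq:46}.

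The only genuinely delicate point --- the one I would take care to state cleanly --- is the conditional i.i.d.\ claim of the first paragraph: that conditioning on the serving distance leaves the remaining satellite distances i.i.d.\ on $(r_0,\,2\rEarth+\rmin]$ (so that Lemma~3 applies jointly, not just marginally), and that the independence of the random channel partition from the geometry lets one treat the co-channel satellites as a fixed, rather than randomly chosen, subset of the non-serving ones. Once that is in place, the ``above the horizon'' indicators are i.i.d.\ Bernoulli$(\PI)$ and the binomial conclusion is immediate; everything else is the elementary algebra sketched above.
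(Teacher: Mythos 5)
Your proposal is correct and follows essentially the same route as the paper: the paper's one-line proof computes $\PI$ as the ratio of the spherical-cap areas between $r_0$ and $\rmax$ to the sphere's area excluding the cap of radius $r_0$, which is precisely your $\bigl(F_R(\rmax)-F_R(r_0)\bigr)/\bigl(1-F_R(r_0)\bigr)$ since $F_R$ is itself a normalized cap area, and your algebra reduces it to \eqref{eq:46} correctly. Your treatment is more careful than the paper's in making explicit the conditional i.i.d.\ structure of the non-serving distances and the independence of the channel assignment from the geometry, which the paper leaves implicit here (and relies on again in the proof of Lemma~5).
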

\begin{proof}
The expression is directly given by the ratio of the surface area where visible interfering satellites can reside (the spherical cap formed by the intersection of the user's plane and the satellites' sphere) to the total surface area of the satellites' sphere excluding the shaded cap of Fig.~2.
\end{proof}

It should be noted that there are $N/K-1$ satellites sharing the same channel with the serving satellite and potentially causing co-channel interference. The probability of having no interference ($N_I=0$) then follows from the corollary below.

\begin{cor}
When the serving satellite is at distance $r_0 \geq \rmin$ from the user, the probability of having zero co-channel interference is given by
\begin{align}
\label{eq:28}
\nonumber
P_0 &\triangleq \Prob(\NI=0) = \left(1-\PI\right)^{\frac{N}{K}-1}\\
&= \left(1-\frac{\rmin-\big(r_0^2-\rmin^2\big)/(2\rEarth)}{2(\rEarth+\rmin)-\big(r_0^2-\rmin^2\big)/(2\rEarth)}\right)^{\frac{N}{K}-1}
\end{align}
for $r_0 \leq \rmax$, and $P_0=1$ when $r_0 > \rmax$. 
\end{cor}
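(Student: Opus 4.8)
The plan is to derive the corollary directly from Lemma~4 together with the frequency-reuse rule stated in the system model, so almost no new computation is needed. First I would recall that, by the channel-assignment policy, exactly $N/K-1$ satellites other than the serving one are allotted the user's channel, and each of these independently lies above the user's horizon (and hence becomes a co-channel interferer) with probability $\PI$ given in~\eqref{eq:46}; this independence is legitimate because, conditioned on $R_0=r_0$, the remaining satellites are i.i.d.\ with the distribution of Lemma~3. Consequently Lemma~4 identifies $\NI$ as a $\mathrm{Binomial}(N/K-1,\PI)$ random variable.

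Next I would simply evaluate the binomial probability mass function at the origin: $P_0=\Prob(\NI=0)=\binom{N/K-1}{0}\,\PI^{\,0}\,(1-\PI)^{N/K-1}=(1-\PI)^{N/K-1}$. Substituting the closed-form expression for $\PI$ from~\eqref{eq:46} into this then reproduces the second equality in~\eqref{eq:28} verbatim; this step is a pure substitution with no algebra to grind through.

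It remains to handle the degenerate regime $r_0>\rmax$. Here the nearest satellite is already below the user's horizon, and since every other satellite is at distance at least $r_0>\rmax$, all $N$ satellites are below the horizon; therefore $\NI=0$ deterministically and $P_0=1$. Equivalently, one may observe that the numerator of $\PI$ in~\eqref{eq:46} vanishes at $r_0=\rmax$ (so $\PI=0$ there) and the spherical cap of potential interferers is empty beyond $\rmax$, which makes the binomial degenerate to a point mass at $0$.

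I do not anticipate a genuine obstacle, since the statement is a corollary precisely because the substantive work resides in Lemma~4. The only point requiring care is the bookkeeping: using $N/K-1$ rather than $N-1$ as the number of Bernoulli trials (only satellites on the user's channel can interfere), and phrasing the boundary case $r_0>\rmax$ so that it is consistent with the $\PI=0$ limit of~\eqref{eq:46}.
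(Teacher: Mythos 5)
Your proposal is correct and follows exactly the paper's (implicit) route: the paper presents this corollary as an immediate consequence of Lemma~4 and the observation that only the $N/K-1$ co-channel satellites can interfere, so $P_0$ is just the binomial pmf at zero with the stated substitution of $\PI$. Your handling of the $r_0>\rmax$ boundary case, including the check that $\PI$ vanishes there, is consistent with the paper's statement and adds nothing that conflicts with it.
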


{Especially, the above probability turns out to be a key factor for LEO constellation analysis since, for the derivation of the following performance metrics, we can separately consider the two complementary events of having either zero interference or non-zero interference.}

\section{Coverage Probability}

In this section, we apply stochastic geometry to derive the downlink coverage probability of the LEO satellite network for a user in an arbitrary location on Earth. The performance measure of coverage probability is defined as
\begin{align}
\label{eq:3}
\Pc\left(T\right) \triangleq \Prob\left(\SINR > T\right),
\end{align}
where $T$ represents the minimum SINR required for successful data transmission. In other words, whenever the SINR of the considered user from its nearest satellite is above the threshold level $T$, it is considered to be within the coverage of the satellite communication network.

\subsection{Coverage Probability under Rayleigh Fading Channel}

In this subsection, the propagation model takes into account the large-scale attenuation with path loss exponent $\alpha$, as well as the small-scale fading. {In particular, the serving channel is assumed to follow normalized Rayleigh fading so that the corresponding channel gain is an exponential random variable with unit mean, i.e., $G_0\sim\operatorname{Exp}\left(1\right)$, while interfering channels are considered to follow general fading (but Rayleigh fading for interfering channels is also analyzed as a special case).}

The Rayleigh fading model is applicable when the line-of-sight portion of signals received at the user's place is smaller than that of non-line-of-sight components. Due to lower altitudes in LEO constellations w.r.t.\ medium Earth or geostationary orbits (MEO/GEO), a strong LoS component is less likely available for the user if the number of satellites is limited. For instance, for constellations with a fewer number of satellites such as Iridium with only 66 satellites, the probability of having even medium elevation angle is rather low \cite{40} that implies very low LoS probability. The probability of receiving LoS signals from interfering satellites is even lower due to their having smaller elevation angles. In addition, due to the shorter orbital period in LEO satellites, the channels vary rapidly with time and location which causes considerable Doppler variation at the user's location.

We express the coverage probability under the Rayleigh fading assumption for the serving channel as follows. In particular, we split the coverage probability into two terms, each of them corresponding to an important operational circumstance: The first term represents the case when there is no interference; and the second term corresponds to the case when there is at least one interfering satellite. 

\begin{theorem}
The probability of coverage for an arbitrarily located user under a Rayleigh-fading serving channel is
\begin{align}
\label{eq:12}
\Pc\left(T\right)={P_0}\,\Prob(\SNR > T) + (1 - P_0)\,\Prob(\SINR > T | \NI > 0)
\end{align}
with
\begin{align}
\label{eq:13}
\nonumber
\Prob(\SNR > T)
&= \frac{N}{2\rEarth(\rEarth+\rmin)}\int_{\rmin}^{\rmax}e^{-\frac{Tr_0^\alpha}{\ps}\sigma^2}\\
&\times \left(1-\frac{r_0^2-\rmin^2}{4\rEarth(\rEarth+\rmin)}\right)^{N-1}r_0\,dr_0
\end{align}
and
\begin{align}
\label{eq:15}
\nonumber
\Prob(\SINR > T | \NI > 0)
&= \frac{N}{2\rEarth(\rEarth+\rmin)}\int_{\rmin}^{\rmax}e^{-\frac{Tr_0^\alpha}{\ps}\sigma^2}\\
\times\mathcal{L}_{I}\left(\frac{Tr_0^\alpha}{\ps}\right)
& \left(1-\frac{r_0^2-\rmin^2}{4\rEarth(\rEarth+\rmin)}\right)^{N-1}r_0\,dr_0,
\end{align}
where $\mathcal{L}_{I}\left(s\right)$ is the Laplace transform of cumulative interference power $I$ that is expressed in Lemma~5 and $P_0$ is given in (\ref{eq:28}) by Corollary~1. 
\end{theorem}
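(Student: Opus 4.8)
The plan is to establish \eqref{eq:12} by the law of total probability with respect to the number of interfering satellites $\NI$, after first conditioning on the serving distance $R_0$, and then to collapse each resulting probability to a one-dimensional integral over $r_0$ by exploiting the exponential tail of the Rayleigh-faded serving gain $G_0\sim\operatorname{Exp}(1)$. First I would condition on $R_0=r_0$. If $r_0>\rmax$ then the nearest satellite is itself below the user's horizon, so $\SINR=\SNR=0<T$ and the user is never covered; hence only $r_0\in[\rmin,\rmax]$ contributes, which is exactly the range of integration appearing in \eqref{eq:13} and \eqref{eq:15}. For $r_0\le\rmax$, Lemma~4 and Corollary~1 give $\Prob(\NI=0\mid R_0=r_0)=P_0$, and — since the fading gains are independent of the satellite point pattern — the event $\{\NI=0\}$ is conditionally independent of $G_0$ given $r_0$. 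Writing $\Prob(\SINR>T\mid R_0=r_0)=P_0\,\Prob(\SNR>T\mid R_0=r_0)+(1-P_0)\,\Prob(\SINR>T\mid R_0=r_0,\NI>0)$ and averaging over $R_0$ with the density $f_{R_0}$ of Lemma~2 produces the two-term form of \eqref{eq:12}.

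For the interference-free term, the event $\{\SNR>T\}$ is equivalent to $\{G_0>Tr_0^{\alpha}\sigma^2/\ps\}$, which by $G_0\sim\operatorname{Exp}(1)$ has probability $e^{-Tr_0^{\alpha}\sigma^2/\ps}$; multiplying by $f_{R_0}(r_0)$ from \eqref{eq:11} and integrating over $[\rmin,\rmax]$ yields \eqref{eq:13}. For the interference term I would condition additionally on the realized interference power $I$ of \eqref{eq:2}: then $\{\SINR>T\}$ is equivalent to $\{G_0>Tr_0^{\alpha}(I+\sigma^2)/\ps\}$, whose conditional probability, again by the exponential law of $G_0$, is $e^{-Tr_0^{\alpha}\sigma^2/\ps}\,e^{-(Tr_0^{\alpha}/\ps)I}$. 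The key step — the reason a Rayleigh serving channel is so convenient, and the standard device in stochastic-geometry $\SINR$ analysis — is that the expectation of the last factor over $I$ is precisely the Laplace transform of $I$ (conditioned on $R_0=r_0$) evaluated at $s=Tr_0^{\alpha}/\ps$, i.e.\ $\mathcal{L}_I(Tr_0^{\alpha}/\ps)$. Substituting this and averaging over $R_0$ with $f_{R_0}$ gives \eqref{eq:15}, completing \eqref{eq:12}.

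I expect the only genuine obstacle to lie outside this theorem, in the computation of $\mathcal{L}_I(s)$ itself (Lemma~5): $I=\sum_{n=1}^{\NI}\ppi G_nR_n^{-\alpha}$ has a random, binomially distributed (by Lemma~4) number of summands, the distances $R_n$ follow the conditional law $f_{R_n\mid R_0}$ of Lemma~3, and the interfering-channel gains $G_n$ may obey arbitrary fading, so disentangling the mixture over $\NI$ and the product over the conditionally i.i.d.\ interferers is where the work is. Granted $\mathcal{L}_I$ from Lemma~5, the present theorem is pure assembly of Lemmas~2--5 and Corollary~1 together with the exponential-tail identity above. One point to handle carefully in the bookkeeping is that $P_0$ in \eqref{eq:28} depends on $r_0$, as does $\mathcal{L}_I(Tr_0^{\alpha}/\ps)$, so in the exact calculation both sit inside the $R_0$-average; \eqref{eq:12}--\eqref{eq:15} is then the natural regrouping of that single integral into its interference-free and interference-present parts.
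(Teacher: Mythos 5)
Your proposal is correct and follows essentially the same route as the paper's own proof: decompose by $\NI=0$ versus $\NI>0$, average over $R_0$ with the density of Lemma~2 restricted to $[\rmin,\rmax]$, and use the exponential tail of $G_0$ to turn $\mathbb{E}_I\bigl[e^{-(Tr_0^{\alpha}/\ps)I}\bigr]$ into $\mathcal{L}_I(Tr_0^{\alpha}/\ps)$. Your closing remark that $P_0$ and $\mathcal{L}_I$ both depend on $r_0$ and therefore belong inside the $R_0$-average is a point the paper's notation glosses over, and is worth keeping.
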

\begin{proof}
To obtain \eqref{eq:13}, we start with the definition of coverage probability for the case of having zero interference:
\begin{align}
\label{eq:16}
\nonumber
\mathbb{E}&_{R_0}\left[\Prob\left(\SNR>T |R_0=r_0\right)\right]\\\nonumber
&=\int_{\rmin}^{\rmax}\Prob\left(\SNR>T |R_0=r_0\right)f_{R_0}\left(r_0\right)dr_0\\\nonumber
&=\frac{N}{2\rEarth(\rEarth+\rmin)}\int_{\rmin}^{\rmax}\Prob\left(\frac{\ps G_0r_0^{-\alpha}}{\sigma^2}>T \right)\\
&\hspace{80pt}\times\left(1-\frac{r_0^2-\rmin^2}{4\rEarth(\rEarth+\rmin)}\right)^{N-1}r_0\,dr_0.
\end{align}
The upper limit for the integral is due to the fact that the satellites below the user's horizon are not visible to the user. Invoking the assumption regarding the distribution of $G_0$, the expression in \eqref{eq:13} is obtained. To prove \eqref{eq:15}, we first note that there is at least one interfering satellite, so that we have
\begin{align}
\label{eq:42}
\nonumber
\mathbb{E}&_{R_0}\left[\Prob\left(\SINR>T |R_0=r_0,\NI>0\right)\right]\\\nonumber
&=\int_{\rmin}^{\rmax}\Prob\left(\SINR>T |R_0=r_0,\NI>0\right)f_{R_0}\left(r_0\right)dr_0\\\nonumber
&=\frac{N}{2\rEarth(\rEarth+\rmin)}\int_{\rmin}^{\rmax}\Prob\left(\frac{\ps G_0 r_0^{-\alpha}}{I+\sigma^2}>T \Big|\NI>0\right)\\
&\hspace{80pt}\times\left(1-\frac{r_0^2-\rmin^2}{4\rEarth(\rEarth+\rmin)}\right)^{N-1}r_0\,dr_0.
\end{align}
When $G_0\sim\operatorname{Exp}\left(1\right)$, we can express the first term in the integrand of \eqref{eq:42} as

\begin{align}
\label{eq:17}
\nonumber
\Prob&\left(G_0>\frac{Tr_0^\alpha\left(I+\sigma^2\right)}{\ps }\bigg|\NI>0\right)\\\nonumber
&=E_{I}\left[\Prob\left(G_0>\frac{Tr_0^\alpha\left(I+\sigma^2\right)}{\ps }\bigg|
\NI>0\right)\right]\\\nonumber
&=E_{I}\left[\exp\left(-\frac{Tr_0^\alpha}{\ps }\left(I+\sigma^2\right)\right)\right]\\
&=e^{-\frac{Tr_0^\alpha}{\ps }\sigma^2}E_{I}\left[e^{-\frac{Tr_0^\alpha}{\ps } I}\right]=e^{-\frac{Tr_0^\alpha}{\ps }\sigma^2}\mathcal{L}_{I}\left(\frac{Tr_0^\alpha}{\ps }\right). 
\end{align}

Substituting \eqref{eq:17} into \eqref{eq:42} completes the derivation of \eqref{eq:15}. 
\end{proof}

It is worth highlighting that, in the case of having $K=N$ orthogonal channels, the system becomes noise-limited and the coverage probability in \eqref{eq:12} will reduce to its first term.
{In the following lemma, we will obtain the Laplace function of the random variable $I$ to complete the derivation of \eqref{eq:15}.}

{
\begin{lem}
 When the serving satellite is at distance $r_0 \geq \rmin$ from the user, the Laplace transform of random variable $I$ is 
\begin{align}
\label{eq:18}
\nonumber
\mathcal{L}_{I}(s)&= \sum_{\nI=1}^{\frac{N}{K}-1} \Bigg(\binom{\frac{N}{K}-1}{\nI}\PI^{\nI}(1-\PI)^{\frac{N}{K}-1-\nI}\\
&\times {\Big(\frac{2}{\rmax^4/\rmin^2-r_0^2}\int_{r_0}^{\rmax}\mathcal{L}_{G_n}\left(s\ppi r_n^{-\alpha}\right)\,r_n\,dr_n\Big)}^{\nI}\Bigg),
\end{align}
where $\PI$ is given in \eqref{eq:46} by Lemma~4 and $\mathcal{L}_{G_n}(\cdot)$ is the Laplace transform of the random variable $G_n$. 
\end{lem}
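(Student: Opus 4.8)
The plan is to evaluate $\mathcal{L}_{I}(s)=\mathbb{E}[e^{-sI}]$ (conditionally on the serving distance $R_0=r_0$) by conditioning further on the number of co-channel interferers $\NI=\nI$, so that the independence structure of the underlying BPP makes the conditional transform factorize across the $\nI$ interferers; averaging the resulting $\nI$-th power against the binomial law of $\NI$ furnished by Lemma~4 then yields \eqref{eq:18}.

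Conditioning on $R_0=r_0$, Lemma~4 states that $\NI$ is $\mathrm{Binomial}\big(\tfrac{N}{K}-1,\PI\big)$, so it suffices to compute $\mathbb{E}\big[e^{-sI}\,\big|\,R_0=r_0,\,\NI=\nI\big]$ for each $\nI\ge1$; the complementary event $\NI=0$ forces $I=0$ and hence $e^{-sI}=1$, which is why the sum in \eqref{eq:18} starts at $\nI=1$ (that contribution being carried separately by $P_0$ in Theorem~1). The structural fact I would establish is that, conditioned on $R_0=r_0$ and $\NI=\nI$, the $\nI$ interfering satellites have distances $R_1,\dots,R_{\nI}$ that are i.i.d.\ with the law of $R$ restricted to the above-horizon range $(r_0,\rmax]$ --- density $f_{R}(r_n)/\big(F_{R}(\rmax)-F_{R}(r_0)\big)$ there, by Lemma~1 --- while the fading gains $G_1,\dots,G_{\nI}$ are i.i.d.\ and independent of the distances. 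Granting this, $I=\sum_{n=1}^{\nI}\ppi G_n R_n^{-\alpha}$ is a sum of $\nI$ i.i.d.\ terms, so $\mathbb{E}\big[e^{-sI}\,\big|\,R_0=r_0,\,\NI=\nI\big]=\big(\mathbb{E}[e^{-s\ppi G_n R_n^{-\alpha}}]\big)^{\nI}$.

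For the single-interferer factor I would use iterated expectation: conditioning on $R_n=r_n$ and using the definition of the Laplace transform of $G_n$ gives $\mathbb{E}[e^{-s\ppi G_n r_n^{-\alpha}}]=\mathcal{L}_{G_n}(s\ppi r_n^{-\alpha})$, and integrating against the truncated distance density gives
\[
\mathbb{E}[e^{-s\ppi G_n R_n^{-\alpha}}]=\frac{1}{F_{R}(\rmax)-F_{R}(r_0)}\int_{r_0}^{\rmax}\mathcal{L}_{G_n}\big(s\ppi r_n^{-\alpha}\big)\,f_{R}(r_n)\,dr_n .
\]
Substituting the closed forms of $f_{R}$ and $F_{R}$ from Lemma~1 reduces this to the integral factor appearing in \eqref{eq:18}; summing over $\nI$ against the weights $\binom{\frac{N}{K}-1}{\nI}\PI^{\nI}(1-\PI)^{\frac{N}{K}-1-\nI}$ from Lemma~4 then gives \eqref{eq:18}.

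I expect the main obstacle to be the structural claim of the second paragraph --- rigorously passing from ``the $\tfrac{N}{K}-1$ co-channel satellites, conditioned to lie beyond $r_0$ and then independently thinned by the above-horizon event'' to ``conditioned on $\NI=\nI$, those survivors are i.i.d.\ with the truncated distance law, independently of their fading.'' This rests on two ingredients: first, that conditionally on $R_0=r_0$ the non-serving satellites of the BPP are i.i.d.\ with distance density $f_{R_n|R_0}(\cdot\mid r_0)$ of Lemma~3, so that the nearest-satellite (serving) conditioning merely excises $\{R\le r_0\}$ without otherwise biasing them; and second, that conditioning a fixed population of i.i.d.\ points on how many of them land in a sub-region leaves those retained i.i.d.\ with the law restricted to that sub-region. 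Once this exchangeability bookkeeping --- including the two successive thinnings, by channel (producing the $\tfrac{N}{K}-1$ trials) and by the horizon (producing the success probability $\PI$) --- is made precise, what remains is the routine substitution of $F_{R}$, $f_{R}$, and the binomial PMF already established in Lemmas~1, 3, and~4.
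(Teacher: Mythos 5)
Your overall route is the same as the paper's Appendix~B: write $\mathcal{L}_I(s)$ as an expectation over $(\NI,R_n,G_n)$, factor the exponential across interferers, integrate out the fading through $\mathcal{L}_{G_n}$, then the distances, and finally average $(\cdot)^{\NI}$ against the binomial law of Lemma~4; your justification of the conditional i.i.d.\ structure is in fact more explicit than the paper's. The genuine gap is in your last substitution. The truncated density you assign to each visible interferer, $f_R(r_n)/\bigl(F_R(\rmax)-F_R(r_0)\bigr)$ on $(r_0,\rmax]$, evaluates via Lemma~1 to $2r_n/(\rmax^2-r_0^2)$, so your one-interferer factor is
\begin{equation*}
\frac{2}{\rmax^2-r_0^2}\int_{r_0}^{\rmax}\mathcal{L}_{G_n}\left(s\ppi r_n^{-\alpha}\right)r_n\,dr_n ,
\end{equation*}
whereas \eqref{eq:18} carries $2/(\rmax^4/\rmin^2-r_0^2)$ in front; since $\rmax^4/\rmin^2=(2\rEarth+\rmin)^2\neq\rmax^2$, these normalizations are not equal, and your closing claim that the substitution ``reduces this to the integral factor appearing in \eqref{eq:18}'' is not an identity. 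The paper reaches the $\rmax^4/\rmin^2$ constant by a different insertion: step (b) of its proof uses the Lemma~3 density $f_R(r_n)/\bigl(1-F_R(r_0)\bigr)=2r_n/(\rmax^4/\rmin^2-r_0^2)$, normalized over \emph{all} satellites beyond $r_0$ (visible or not), and integrates it only over $(r_0,\rmax]$, so that the integral factor has total mass $\PI$ rather than $1$.

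The two factors differ by exactly $\PI=\frac{\rmax^2-r_0^2}{\rmax^4/\rmin^2-r_0^2}$ per interferer, so your completed derivation and the displayed lemma differ by $\PI^{\nI}$ in the $\nI$-th term; they cannot both equal $\mathbb{E}\bigl[e^{-sI}\,\mathds{1}(\NI>0)\bigr]$. A check at $s=0$ separates them: your version gives $\sum_{\nI\geq1}\binom{N/K-1}{\nI}\PI^{\nI}(1-\PI)^{N/K-1-\nI}=1-P_0=\Prob(\NI>0)$, as it must, while keeping both the weight $\PI^{\nI}$ and the $\rmax^4/\rmin^2$ normalization gives $\sum_{\nI\geq1}\binom{N/K-1}{\nI}\PI^{2\nI}(1-\PI)^{N/K-1-\nI}$, i.e., the horizon thinning is counted once in the binomial weight and again inside the distance integral. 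So your probabilistic bookkeeping is the internally consistent one, but as written your proof does not actually terminate at the stated formula: the ``routine substitution'' you defer is precisely where your expression and \eqref{eq:18} part ways, and the discrepancy does not cancel downstream. You need either to keep your truncated density and state the result with $2/(\rmax^2-r_0^2)$, or to adopt the paper's untruncated Lemma~3 density over $(r_0,\rmax]$ and then justify (or correct) the accompanying $\PI^{\nI}$ weight.
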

\begin{proof}
See Appendix B.
\end{proof}
{Consequently, by only specifying $\mathcal{L}_{G_n}(\cdot)$ at the point $s\ppi r_n^{-\alpha}$, i.e., assuming some specific fading model, the Laplace transform of interference for any fading distribution can be calculated using Lemma~5. For instance, when $G_n$ is exponentially distributed, $\mathcal{L}_{G_n}(s\ppi r_n^{-\alpha})=\frac{1}{1+\ppi sr_n^{-\alpha}}$. Thus, the Laplace function of interference when interfering channels are assumed to be Rayleigh is given by the following corollary. }}
\begin{cor}
When the serving satellite is at distance $r_0 \geq \rmin$ from the user and interfering channels experience Rayleigh fading, i.e., $G_n\sim\operatorname{Exp}\left(1\right)$ for $n=1,\ldots,\NI$, the Laplace transform of random variable $I$ is 
\begin{align}
\label{eq:188}
\nonumber
\mathcal{L}_{I}(s)&=\sum_{\nI=1}^{\frac{N}{K}-1} \binom{\frac{N}{K}-1}{\nI}\PI^{\nI}(1-\PI)^{\frac{N}{K}-1-\nI}\\
&\times {\left(\frac{2}{\rmax^4/\rmin^2-r_0^2}\int_{r_0}^{\rmax}\left(\frac{r_n}{1+\ppi sr_n^{-\alpha}}\right)dr_n\right)}^{\nI}.
\end{align}
\end{cor}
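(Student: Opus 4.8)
The plan is to obtain Corollary~2 as an immediate specialization of Lemma~5: the only quantity in \eqref{eq:18} that depends on the interfering fading model is $\mathcal{L}_{G_n}(\cdot)$ appearing inside the inner integral, so it suffices to evaluate this Laplace transform for a unit-mean exponential random variable and substitute.

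First I would recall the Laplace transform of $G_n\sim\operatorname{Exp}(1)$: for any $t\geq 0$,
\begin{align}
\mathcal{L}_{G_n}(t)=\mathbb{E}\!\left[e^{-tG_n}\right]=\int_0^\infty e^{-tg}e^{-g}\,dg=\frac{1}{1+t}.
\end{align}
Since $s>0$, $\ppi>0$, and $r_n>0$ on the integration range $r_0\leq r_n\leq\rmax$, the argument $t=s\ppi r_n^{-\alpha}$ is nonnegative, so this evaluation is valid (it lies well inside the region of convergence), and it gives $\mathcal{L}_{G_n}\!\left(s\ppi r_n^{-\alpha}\right)=\dfrac{1}{1+\ppi s\,r_n^{-\alpha}}$.

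Next I would substitute this into the inner integral of \eqref{eq:18}, replacing $\mathcal{L}_{G_n}\!\left(s\ppi r_n^{-\alpha}\right)\,r_n$ by $\dfrac{r_n}{1+\ppi s\,r_n^{-\alpha}}$; the binomial weighting over $\nI$, the factor $\binom{\frac{N}{K}-1}{\nI}\PI^{\nI}(1-\PI)^{\frac{N}{K}-1-\nI}$, and the normalizing constant $2/(\rmax^4/\rmin^2-r_0^2)$ are all inherited verbatim from Lemma~5 (and $\PI$ is still given by Lemma~4). This reproduces \eqref{eq:188} exactly, completing the proof.

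I do not expect any real obstacle: the argument reduces to the one-line substitution above once the exponential Laplace transform is recalled, and the only thing to verify — that the transform's argument stays nonnegative — is immediate from positivity of $s$, $\ppi$, and $r_n$. (If one wished, the remaining integral $\int_{r_0}^{\rmax} r_n/(1+\ppi s\,r_n^{-\alpha})\,dr_n$ could be carried out in closed form for particular path-loss exponents $\alpha$, but that is not required for the statement as given.)
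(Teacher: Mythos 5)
Your proposal is correct and matches the paper's own reasoning: the paper also obtains Corollary~2 by substituting the unit-mean exponential Laplace transform $\mathcal{L}_{G_n}(s\ppi r_n^{-\alpha})=1/(1+\ppi s r_n^{-\alpha})$ into the general-fading expression of Lemma~5, with all other factors carried over unchanged. Nothing is missing.
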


{ Consequently, for Rayleigh-fading interfering channels and particular values of the path loss exponent, Lemma~5 will reduce to elementary functions, which will result in a simpler expression for \eqref{eq:15}.}
Using \cite[Eq.~3.194.5]{table}, the integral in \eqref{eq:188} can be rewritten as 
\begin{align}
\label{eq:19}
\nonumber
\int \frac{r_n}{1+\ppi sr_n^{-\alpha}}&\,dr_n\\
=-\,  \frac{r_n^2}{2}&\left({}_2F_1\left(1,\frac{2}{\alpha};1+\frac{2}{\alpha};-\, \frac{r_n^\alpha}{\ppi s}\right)-1\right),
\end{align}
where $_2F_1\left(\cdot,\cdot;\cdot;\cdot\right)$ is the Gauss's hyper-geometric function. Using the definition of the function \cite[Eq.~9.100]{table} and substituting with special arguments, it is reduced to elementary functions. If $\alpha=2$, we have \cite[Eq.~9.121.5]{table}
\begin{align}
\label{eq:20}
{}_2F_1\left(1,1;2;-\, \frac{r_n^\alpha}{\ppi s}\right)=\frac{\ppi s\,\ln{\left(1+r_n^2/(\ppi s)\right)}}{r_n^2}
\end{align}
so that \eqref{eq:188} can be rewritten as
\begin{align}
\label{eq:43}
\nonumber
\mathcal{L}_{I}(s)&=\sum_{\nI=1}^{\frac{N}{K}-1} \binom{\frac{N}{K}-1}{\nI}\PI^{\nI}(1-\PI)^{\frac{N}{K}-1-\nI}\\
&\times\frac{\rmax^2-r_0^2}{\rmax^4/\rmin^2-r_0^2}~{\left(\frac{\ppi s}{\rmax^2-r_0^2}~{\ln\left(\frac{\ppi s+r_0^2}{\ppi s+\rmax^2}\right)}+1\right)}^{\nI}.
\end{align}
On the other hand, if we assume $\alpha=4$, the Gauss hyper-geometric function can be written as \cite[Eq.~9.121.27]{table}
\begin{align}
\label{eq:21}
{}_2F_1\left(1,\frac{1}{2};\frac{3}{2};-\, \frac{r_n^4}{\ppi s}\right)=\frac{\sqrt{\ppi s}\,\arctan{\left(r_n^2/\sqrt{\ppi s}\right)}}{r_n^2},
\end{align}
by which \eqref{eq:188} is simplified into a closed-form expression as
\begin{align}
\label{eq:44}
\nonumber
\mathcal{L}_{I}(s)
&=\sum_{\nI=1}^{\frac{N}{K}-1} \binom{\frac{N}{K}-1}{\nI}\PI^{\nI}(1-\PI)^{\frac{N}{K}-1-\nI}\\
\times&\frac{\rmax^2-r_0^2}{\rmax^4/\rmin^2-r_0^2}~{\left(\frac{\sqrt{\ppi s}}{\rmax^2-r_0^2}~{\arctan\left(\frac{\sqrt{\ppi s}\left(r_0^2-\rmax^2\right)}{{\ppi s}+\rmax^2r_0^2}\right)}+1\right)}^{\nI}.
\end{align}

\subsection{Coverage Probability under Non-fading Channels}

In this subsection, we derive the coverage probability for non-fading propagation environments. { This propagation model is applicable (at least as an accurate approximation) when the number of satellites in a constellation is large enough, so that it is likely to have multiple satellites in LoS propagation range. Consequently, the serving satellite is likely high above the user and potential multi-path fading components are weak compared to the direct propagation path.} Moreover, the LoS probability for interfering satellites increases as we decrease the number of frequency channels. 

Similar to Theorem~1, the coverage probability is split into two terms. The first one is for the case of having no interference and, in the second term, we assume that there is at least one interfering satellite. As will be seen shortly, other than representing two important communication scenarios, such division is needed to make our approach tractable. For the non-fading serving channel, we denote $G_0=1$. { Like in the previous section, any general fading statistics can be assumed for interfering channels, for which non-fading interference is considered as a special case towards the end of the section.}

\begin{theorem}
The coverage probability of a user with a non-fading serving channel is
\begin{align}
\label{eq:29}
\Pc(T)={P_0}\,\Prob(\SNR>T) + (1-P_0)\,\Prob(\SINR>T|\NI>0),
\end{align}
with
\begin{equation}
\label{eq:399}
\Prob(\SNR>T)=F_{R_0}\left(\left(\frac{\ps}{T\sigma^2}\right)^{\frac{1}{\alpha}}\right)
\end{equation}
and
\begin{align}
\label{eq:39}
\nonumber
\Prob(&\SINR>T|\NI>0)
=\frac{N}{4\pi \rEarth(\rEarth+\rmin)}\int_{\rmin}^{\rmax}\int_{-\infty}^{\infty}\mathcal{L}_{I}(j\omega)\\
&\times\left(\frac{e^{j\left({\frac{\ps }{Tr_0^\alpha}-\sigma^2}\right)\omega}-1}{j\omega} 
\right)\left(1-\frac{r_0^2-\rmin^2}{4\rEarth(\rEarth+\rmin)}\right)^{N-1}r_0\,dr_0d\omega.
\end{align}
\end{theorem}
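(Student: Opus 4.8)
The plan is to reuse the two-regime bookkeeping from the proof of Theorem~1 and then deal with the non-fading serving channel by inverting the interference Laplace transform. Conditioning on the serving distance $R_0=r_0$ and splitting on $\{\NI=0\}$ versus $\{\NI>0\}$, whose conditional probabilities given $R_0=r_0$ are $P_0$ from Corollary~1 and $1-P_0$, one writes $\Pc(T)=\Prob(\SINR>T,\NI=0)+\Prob(\SINR>T,\NI>0)$ and then removes the conditioning on $R_0$ by integrating against the density $f_{R_0}$ of Lemma~2. With $G_0=1$ the signal term carries no randomness, so each conditional coverage probability becomes a statement about the serving distance and about $I$ alone.

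For the interference-free term, given $R_0=r_0\le\rmax$ and $\NI=0$ we have $I=0$, hence $\SINR=\SNR=\ps r_0^{-\alpha}/\sigma^{2}$ is deterministic and exceeds $T$ exactly when $r_0<(\ps/(T\sigma^{2}))^{1/\alpha}$. Thus $\Prob(\SNR>T\mid R_0=r_0)$ is the indicator of that event, and weighting it by $P_0$ and by $f_{R_0}$ and integrating over $r_0$ up to the horizon cutoff $\rmax$ produces the first summand of \eqref{eq:29}, in which the SNR factor has the cumulative-distribution form $F_{R_0}\big((\ps/(T\sigma^{2}))^{1/\alpha}\big)$ reported in \eqref{eq:399}.

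The substantive step is the interference term. Given $R_0=r_0$ and $\NI>0$, since $G_0=1$,
\[
\Prob(\SINR>T\mid R_0=r_0,\NI>0)=\Prob\!\left(I<\frac{\ps}{Tr_0^{\alpha}}-\sigma^{2}\ \Big|\ R_0=r_0,\NI>0\right),
\]
so we must turn the Laplace transform of $I$ from Lemma~5 into a value of its distribution function at $a\triangleq\ps/(Tr_0^{\alpha})-\sigma^{2}$. I would invert it by a Fourier/Gil--Pelaez argument: since $\frac{e^{ja\omega}-1}{j\omega}=\int_{0}^{a}e^{j\omega u}\,du$ and $\frac1{2\pi}\int_{-\infty}^{\infty}e^{j\omega(u-I)}\,d\omega=\delta(u-I)$, interchanging the integrations and taking expectations gives
\[
\Prob(0\le I\le a,\ \NI>0\mid R_0=r_0)=\frac1{2\pi}\int_{-\infty}^{\infty}\mathcal L_{I}(j\omega)\,\frac{e^{ja\omega}-1}{j\omega}\,d\omega,
\]
where $\mathcal L_{I}(j\omega)$ is precisely the transform of Lemma~5 — the sum over $\nI\ge1$ there being exactly $\mathbb E[e^{-j\omega I}\mathbf 1\{\NI>0\}\mid R_0=r_0]$, so the atom of $I$ at $0$ (which lives on $\{\NI=0\}$) is already removed and no boundary ambiguity occurs. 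Because $I\ge0$, the left side equals $\Prob(\SINR>T,\NI>0\mid R_0=r_0)$ when $a\ge0$, while the $\omega$-integral vanishes when $a<0$ (i.e.\ when $r_0$ is so large that even the serving SNR is below $T$), so the identity holds uniformly for $r_0\in[\rmin,\rmax]$. Substituting $a=\ps/(Tr_0^{\alpha})-\sigma^{2}$, multiplying by $f_{R_0}(r_0)=N\big(1-\tfrac{r_0^{2}-\rmin^{2}}{4\rEarth(\rEarth+\rmin)}\big)^{N-1}\tfrac{r_0}{2\rEarth(\rEarth+\rmin)}$, and integrating $r_0$ over $[\rmin,\rmax]$ yields \eqref{eq:39}.

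The main obstacle is making this inversion rigorous rather than formal. The kernel $\frac{e^{ja\omega}-1}{j\omega}$ decays only like $1/|\omega|$ and $\mathcal L_I(j\omega)$ need not be integrable, so the exchange of $\mathbb E$ with the $\omega$-integral is not justified by absolute convergence; the clean route is to read the $\omega$-integral as an improper (principal-value) integral and to invoke the Gil--Pelaez inversion theorem, applicable because $I$ restricted to $\{\NI>0\}$ has a continuous law (each interfering distance $R_n$ from Lemma~3 and each gain $G_n$ is continuous) and hence its distribution function is continuous at $a$. A secondary point is that Lemma~5 is proved for real arguments $s$ in Appendix~B; since that derivation uses only the independence of the interferers and the one-interferer transform $\mathcal L_{G_n}(\cdot)$, the stated formula extends by analytic continuation to $s=j\omega$ and genuinely equals the characteristic function of $I$ on $\{\NI>0\}$. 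Everything else reduces to the bookkeeping already done for Theorem~1.
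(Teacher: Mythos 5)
Your proposal is correct and follows essentially the same route as the paper: the same decomposition into the zero- and nonzero-interference events, the same observation that with $G_0=1$ the SNR term reduces to $F_{R_0}\bigl((\ps/(T\sigma^2))^{1/\alpha}\bigr)$, and the same Fourier inversion of $\mathcal{L}_I(j\omega)$ against the kernel $(e^{ja\omega}-1)/(j\omega)$ to recover $\Prob(I<a)$. The only (cosmetic) difference is that you justify the inversion via a Gil--Pelaez/principal-value argument, whereas the paper invokes the Parseval--Plancherel identity together with the existence of a density for $I$ on $\{\NI>0\}$; both rest on the same continuity of the conditional law of $I$.
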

\begin{proof}
The derivation of \eqref{eq:399} is straightforward by substituting $G_0 \to 1$ in \eqref{eq:41} and using the definition of a CDF.
Excluding zero interference from $I$ satisfies the sufficient conditions, given in \cite[Prop.~A.2]{37}, for its PDF $f_{I}(\cdot)$ to exist (especially, $I$ is continuous). Substituting $G_0\to1$ in the first term of the integral in \eqref{eq:42}, we have
\begin{align}
\label{eq:34}
 \Prob&\left(Tr_0^\alpha(\sigma^2+I)<\ps \big|\NI>0\right)= \Prob\left(I<\frac{\ps }{Tr_0^\alpha}-\sigma^2\Big|\NI>0\right)\\
 \label{eq:34-2}
& =\int_{0}^{\frac{\ps }{Tr_0^\alpha}-\sigma^2}f_{I}(i)di\\
\label{eq:34-3}
&=\frac{1}{2\pi}\int_{-\infty}^{\infty}\mathcal{L}_{I}(s)|_{s=j\omega}\frac{e^{j\left({\frac{\ps }{Tr_0^\alpha}-\sigma^2}\right)\omega}-1}{j\omega}d\omega.
\end{align}
The latter equality follows immediately from the Parseval--Plancherel property and from the fact that the Fourier transform of the square integrable function $\mathds{1} \big(0\leq t \leq \frac{\ps }{Tr_0^\alpha}-\sigma^2\big)$ is
$\big(1-e^{-j({\frac{\ps}{Tr_0^\alpha}-\sigma^2})\omega}\big)/(j\omega)$ according to \cite{29}.
\end{proof}

The Laplace function $\mathcal{L}_{I}(s)$ for general fading can be obtained using Lemma~5 given in the previous section. The following corollary presents the Laplace function when interfering signals experience non-fading propagation as well.

{
\begin{cor}
When the serving satellite is at distance $r_0 \geq \rmin$ from the user, the Laplace transform of $I$ for non-fading channels is
\begin{align}
\label{eq:36}
\nonumber
&\mathcal{L}_{I}(s)=\sum_{\nI=1}^{\frac{N}{K}-1} \binom{\frac{N}{K}-1}{\nI}\PI^{\nI}(1-\PI)^{\frac{N}{K}-1-\nI}\\
&\times \left(\frac{2(s\ppi )^{2/\alpha}}{\alpha\left(\rmax^4/\rmin^2-r_0^2\right)}\left[\Gamma(-2/\alpha,s\ppi \rmax^{-\alpha})-\Gamma(-2/\alpha,s\ppi r_0^{-\alpha})\right]\right)^{\nI},
\end{align}
where $\Gamma(a,x)=\int_x^\infty y^{a-1}e^{-y}dy$ denotes the upper incomplete gamma function. 
\end{cor}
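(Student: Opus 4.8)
The plan is to specialize Lemma~5 to a deterministic interfering channel gain and then collapse the remaining integral onto the upper incomplete gamma function. For a non-fading interfering link the gain is the constant $G_n = 1$, so its Laplace transform is $\mathcal{L}_{G_n}(t) = \mathbb{E}\big[e^{-tG_n}\big] = e^{-t}$, and hence $\mathcal{L}_{G_n}\big(s\ppi r_n^{-\alpha}\big) = e^{-s\ppi r_n^{-\alpha}}$. Substituting this into \eqref{eq:18} leaves the binomial weighting over $\nI$ untouched, so the only work left is to evaluate the inner integral $\int_{r_0}^{\rmax} e^{-s\ppi r_n^{-\alpha}}\, r_n\, dr_n$ in closed form.

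To that end I would apply the change of variables $y = s\ppi r_n^{-\alpha}$, under which $r_n\, dr_n = -\tfrac{1}{\alpha}(s\ppi)^{2/\alpha}\, y^{-2/\alpha - 1}\, dy$ and the endpoints $r_n = r_0$, $r_n = \rmax$ map to $y = s\ppi r_0^{-\alpha}$, $y = s\ppi \rmax^{-\alpha}$, respectively. Since $\rmax > r_0$, the image of $\rmax$ is the smaller limit; flipping the orientation absorbs the minus sign and gives
\begin{equation*}
\int_{r_0}^{\rmax} e^{-s\ppi r_n^{-\alpha}}\, r_n\, dr_n
= \frac{(s\ppi)^{2/\alpha}}{\alpha}\int_{s\ppi \rmax^{-\alpha}}^{s\ppi r_0^{-\alpha}} y^{-2/\alpha-1} e^{-y}\, dy .
\end{equation*}
Writing this finite-range integral as a difference of two $\int_x^\infty$ tails with parameter $a = -2/\alpha$ identifies it with $\Gamma(-2/\alpha, s\ppi \rmax^{-\alpha}) - \Gamma(-2/\alpha, s\ppi r_0^{-\alpha})$; substituting back into \eqref{eq:18} and pulling the constants into the parenthesized factor yields \eqref{eq:36}.

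I do not anticipate any real obstacle: the original integrand is bounded and continuous on the finite interval $[r_0,\rmax]\subset[\rmin,\infty)$, so the integral and each incomplete gamma term $\Gamma(-2/\alpha,x)$ with $x>0$ are well defined despite the negative first argument, and the identity, established for $\operatorname{Re}(s)>0$, extends by continuity to the values $s=j\omega$ that the coverage expression \eqref{eq:39} requires. The entire proof is the substitution above together with Lemma~5.
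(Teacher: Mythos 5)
Your proposal is correct and follows exactly the paper's own route: the paper's proof likewise notes that for non-fading interference the integrand of Lemma~5 reduces to $e^{-s\ppi r_n^{-\alpha}}$ and that the substitution $s\ppi r_n^{-\alpha}\to y$ turns the integral into a difference of upper incomplete gamma functions. Your version simply spells out the Jacobian, the limit reversal, and the continuation to $s=j\omega$ in more detail than the paper does.
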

\begin{proof}
The first term of the integrand in Lemma~5 will be reduced to an exponential function. As a result, the integral can be expressed in form of the upper incomplete gamma function defined as $\Gamma(a,x)=\int_x^\infty y^{a-1}e^{-y}dy$ by changing integration variable as $s\ppi r_n^{-\alpha}\to y$.
\end{proof}}

\section{Average Achievable Rate}

In this section, we focus on the average achievable data rate. The technical tools and expressions presented in Sections~II and III are reused also herein.
The average achievable rate (in bit/s/Hz) is defined as
\begin{align}
\label{eq:45}
\bar{C} \triangleq\frac{1}{K}~\mathbb{E}\left[\log_2\left(1+\SINR\right)\right],
\end{align}
which states the ergodic capacity from the Shannon--Hartley theorem over a fading communication link normalized to the bandwidth of $1/K$ [Hz].

\subsection{Average Achievable Rate under Rayleigh Fading Channel}

We can calculate the expression for the average rate of an arbitrary user under the assumption of a Rayleigh-fading serving channel as follows. It is worth noting that the average is taken over both the spatial BPP and fading distribution.

\begin{theorem}
The downlink average rate (in bits/s/Hz) of an arbitrarily located user and its serving satellite under Rayleigh fading assumption, i.e., $G_0\sim\operatorname{Exp\,(1)}$, is
\begin{align}
\label{eq:22}
\nonumber
\bar{C} &\triangleq\frac{P_0}{K}\,\mathbb{E}\left[\log_2\left(1+\SNR\right)\right]\\
&\hspace{10pt}+\frac{1-P_0}{K}\,\mathbb{E}\left[\log_2\left(1+\SINR\right)|\NI > 0\right],
\end{align}
with
\begin{align}
\label{eq:37}
\nonumber
\mathbb{E}&\left[\log_2\left(1+\SNR\right)\right]\\\nonumber
&=\frac{N}{2 \ln(2)\rEarth(\rEarth+\rmin)}\int_{\rmin}^{\rmax}\int_{t>0}e^{-\frac{r_0^\alpha}{\ps }\sigma^2 \left(e^t-1\right)}\\
&\hspace{10pt}\times\left(1-\frac{r_0^2-\rmin^2}{4\rEarth(\rEarth+\rmin)}\right)^{N-1}r_0\,dtdr_0
\end{align}
and
\begin{align}
\label{eq:33}
\nonumber
\mathbb{E}&\left[\log_2\left(1+\SINR\right)|\NI>0\right]\\\nonumber
&=\frac{N}{2 \ln(2) \rEarth(\rEarth+\rmin)}\int_{\rmin}^{\rmax}\int_{t>0}e^{-\frac{r_0^\alpha}{\ps }\sigma^2 \left(e^t-1\right)}\\
&\hspace{10pt}\times\mathcal{L}_{I}\left(\frac{r_0^\alpha}{\ps } \left(e^t-1\right)\right)\left(1-\frac{r_0^2-\rmin^2}{4\rEarth(\rEarth+\rmin)}\right)^{N-1}r_0\,dtdr_0.
\end{align}
\end{theorem}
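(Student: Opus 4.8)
The plan is to start from the definition $\bar C=\frac{1}{K}\,\mathbb{E}[\log_2(1+\SINR)]$ in \eqref{eq:45} and reproduce, at the level of the $\log$-average, the same two-event decomposition that underlies Theorem~1. Conditioning on whether any interfering satellites are above the horizon, the law of total expectation splits $\mathbb{E}[\log_2(1+\SINR)]$ into a term on the event $\NI=0$ and a term on the event $\NI>0$; on the former the interference sum \eqref{eq:2} is empty so $\SINR$ reduces to $\SNR$ as in \eqref{eq:41}. Using the interference-free probability $P_0$ from Corollary~1 in exactly the role it plays in \eqref{eq:12}, this yields the two-term form \eqref{eq:22}, and the task reduces to evaluating the two conditional log-expectations $\mathbb{E}[\log_2(1+\SNR)]$ and $\mathbb{E}[\log_2(1+\SINR)\mid\NI>0]$.

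For each of these I would further condition on the serving distance $R_0=r_0$, exactly as in \eqref{eq:16} and \eqref{eq:42}, and invoke the standard tail identity that, for any nonnegative random variable $X$, $\mathbb{E}[\log_2(1+X)]=\frac{1}{\ln 2}\int_{t>0}\Prob(X>e^{t}-1)\,dt$. With $X=\SNR$ this turns the inner expectation into $\frac{1}{\ln 2}\int_{t>0}\Prob(\SNR>e^{t}-1\mid R_0=r_0)\,dt$, and since $\Prob(\SNR>T\mid R_0=r_0)=e^{-Tr_0^\alpha\sigma^2/\ps}$ under $G_0\sim\operatorname{Exp}(1)$, the inner tail probability is precisely the integrand of \eqref{eq:13} evaluated at threshold $T=e^{t}-1$. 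Likewise, with $X=(\SINR\mid\NI>0)$ the inner tail probability $\Prob(\SINR>T\mid R_0=r_0,\NI>0)=e^{-Tr_0^\alpha\sigma^2/\ps}\,\mathcal{L}_I(Tr_0^\alpha/\ps)$ is the integrand of \eqref{eq:15}, by the same computation \eqref{eq:17} already carried out there. Averaging over $R_0$ against the density $f_{R_0}$ from Lemma~2, substituting $T\to e^{t}-1$ (so $e^{-\frac{Tr_0^\alpha}{\ps}\sigma^2}\to e^{-\frac{r_0^\alpha}{\ps}\sigma^2(e^{t}-1)}$ and $\mathcal{L}_I(\frac{Tr_0^\alpha}{\ps})\to\mathcal{L}_I(\frac{r_0^\alpha}{\ps}(e^{t}-1))$), and exchanging the order of the $r_0$- and $t$-integrations then produces \eqref{eq:37} and \eqref{eq:33} after absorbing the constant $\frac{1}{\ln 2}$ into the $\frac{N}{2\rEarth(\rEarth+\rmin)}$ prefactor.

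The routine part is the bookkeeping of the substitution $T=e^{t}-1$ and the merging of constants; the step that actually needs justification is the interchange of the $t$- and $r_0$-integrations, for which I would appeal to Tonelli's theorem — the integrand $e^{-\frac{r_0^\alpha}{\ps}\sigma^2(e^{t}-1)}\,\mathcal{L}_I(\cdot)\,(1-\cdots)^{N-1}r_0$ is nonnegative on $(\rmin,\rmax)\times(0,\infty)$, and $\mathcal{L}_I$ is the Laplace transform of a nonnegative quantity hence bounded by $1$, so the double integral is finite and the swap is legitimate. The one subtlety worth a sentence, already implicit in Theorem~1, is that reusing \eqref{eq:13} and \eqref{eq:15} as the $t$-integrands presupposes that $f_{R_0}$ and $\mathcal{L}_I$ enter under the $\NI=0$ / $\NI>0$ conditioning exactly as in the coverage derivation; I would therefore phrase the split in the first paragraph so that it is manifestly the $\log$-averaged counterpart of \eqref{eq:12}, rather than re-deriving any distance statistics from scratch.
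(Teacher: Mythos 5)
Your proposal is correct and follows essentially the same route as the paper's Appendix~C: the same $\NI=0$ versus $\NI>0$ split, conditioning on $R_0$, the identity $\mathbb{E}[Y]=\int_{t>0}\Prob(Y>t)\,dt$ applied to $Y=\ln(1+\SINR)$, and the exponential law of $G_0$ yielding the Laplace transform of $I$. The only cosmetic difference is that the paper conditions on $R_0=r_0$ before applying the tail formula, so the integration order of \eqref{eq:33} falls out directly and no Tonelli interchange is needed; your reuse of the Theorem~1 integrands at threshold $T=e^{t}-1$ is the same computation packaged differently.
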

\begin{proof}
See Appendix~C.
\end{proof}

{The above applies to interfering channels with any fading statistics. However, analogous to the results in Section~III, by considering $G_n\sim\operatorname{Exp}\left(1\right)$ for some specific $\alpha$ values, i.e., $\alpha=2$ and $\alpha=4$}, and substituting \eqref{eq:43} or \eqref{eq:44} into \eqref{eq:33} will lead to more simplified expressions for Theorem~3. Moreover, if we allocate $K=N$ orthogonal channels to the system, the second term in \eqref{eq:22} will be eliminated and the network's performance will become noise-limited.

\subsection{Average Achievable Rate under Non-fading Channels}

In this subsection, we derive the average data rate for non-fading serving channels and any fading statistics for interference. The rate expression is again split into two terms {to consider zero and non-zero interference conditions separately.}

\begin{theorem}
The downlink average rate of an arbitrary located mobile user and its serving satellite for a non-fading channel is 
\begin{align}
\label{eq:222}
\nonumber
\bar{C} &\triangleq\frac{P_0}{K}\,\mathbb{E}\left[\log_2\left(1+\SNR\right)\right]\\
&\hspace{10pt}+\frac{(1-P_0)}{K}\,\mathbb{E}\left[\log_2\left(1+\SINR\right)|\NI > 0\right],
\end{align}
with
\begin{equation}
\label{eq:31}
\mathbb{E}\left[\log_2\left(1+\SNR\right)\right]=\frac{1}{\ln(2)}\int_{t>0} F_{R_0}\left(\left(\frac{\ps }{\left(e^t-1\right)\sigma^2}\right)^{\frac{1}{\alpha}}\right) dt
\end{equation}
and 
\begin{align}
\label{eq:2222}
\nonumber
&\mathbb{E}\left[\log_2\left(1+\SINR|\NI>0\right)\right]\\\nonumber
&=\frac{N}{4\pi \ln(2)\rEarth(\rEarth+\rmin)}\int_{\rmin}^{\rmax}\int_{t>0}\int_{-\infty}^{\infty}\mathcal{L}_{I}(j\omega)\\
&\hspace{1pt}\times\left(\frac{e^{j\left({\frac{\ps }{r_0^\alpha\left(e^t-1\right)}-\sigma^2}\right)\omega}-1}{j\omega} 
\right)\left(1-\frac{r_0^2-\rmin^2}{4\rEarth(\rEarth+\rmin)}\right)^{N-1}r_0 \, d\omega \, dt \, dr_0.
\end{align}
\end{theorem}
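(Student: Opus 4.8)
The plan is to derive Theorem~4 by recycling the machinery already developed for Theorem~2 (the non-fading substitution $G_0\to1$ and the Parseval--Plancherel inversion of $\mathcal{L}_I$) together with the rate-integral representation used in Theorem~3. First I would start from the definition \eqref{eq:45} and split the expectation according to whether $\NI=0$ or $\NI>0$ via the law of total expectation. Conditioned on $R_0=r_0$, the event $\{\NI=0\}$ depends only on the positions of the remaining satellites and is therefore independent of $G_0$, so on that event $\SINR$ collapses to the $\SNR$ of \eqref{eq:41}; this produces the two-term decomposition \eqref{eq:222} with weights $P_0$ and $1-P_0$, exactly parallel to \eqref{eq:29}.

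For the noise-limited term \eqref{eq:31} I would invoke the elementary identity $\mathbb{E}[\ln(1+X)]=\int_{t>0}\Prob(X>e^t-1)\,dt$, valid for any nonnegative $X$, with $X=\SNR$. Setting $G_0=1$ in \eqref{eq:41} gives $\SNR=\ps R_0^{-\alpha}/\sigma^2$, hence $\Prob(\SNR>e^t-1)=\Prob\bigl(R_0<(\ps/((e^t-1)\sigma^2))^{1/\alpha}\bigr)=F_{R_0}\bigl((\ps/((e^t-1)\sigma^2))^{1/\alpha}\bigr)$; dividing by $\ln2$ to pass to base-$2$ logarithms yields \eqref{eq:31}, which is simply the rate-integral analogue of \eqref{eq:399}.

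For the interference-limited term \eqref{eq:2222} I would apply the same identity with $X=\SINR$ conditioned on $\NI>0$, then condition further on $R_0=r_0$ and average against $f_{R_0}$ from \eqref{eq:11}. With $G_0=1$ one has $\{\SINR>e^t-1\}=\{I<\ps/(r_0^\alpha(e^t-1))-\sigma^2\}$, and since $I$ restricted to $\{\NI>0\}$ is continuous with a density $f_I$ — the sufficient conditions of \cite[Prop.~A.2]{37} hold, as already noted in the proof of Theorem~2 — this probability equals $\int_0^{\ps/(r_0^\alpha(e^t-1))-\sigma^2}f_I(i)\,di$. The Parseval--Plancherel step with the Fourier transform of $\mathds{1}\bigl(0\le i\le \ps/(r_0^\alpha(e^t-1))-\sigma^2\bigr)$ — verbatim the manipulation of \eqref{eq:34}--\eqref{eq:34-3} with $T$ replaced by $e^t-1$ — rewrites it as $\tfrac{1}{2\pi}\int_{-\infty}^{\infty}\mathcal{L}_I(j\omega)\,\frac{e^{j(\ps/(r_0^\alpha(e^t-1))-\sigma^2)\omega}-1}{j\omega}\,d\omega$. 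Multiplying by $f_{R_0}(r_0)$, integrating $r_0$ over $[\rmin,\rmax]$ and $t$ over $(0,\infty)$, and dividing by $\ln2$ assembles the prefactor $\tfrac{N}{4\pi\ln(2)\rEarth(\rEarth+\rmin)}$ and gives \eqref{eq:2222}.

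The routine ingredients (the log-to-tail identity, the CDF substitution, collecting constants) are immediate once Theorems~2 and~3 are in hand, so the only real obstacle is justifying the interchange of the $t$-, $r_0$- and $\omega$-integrations: the $\omega$-integral is only conditionally convergent, so Fubini has to be defended on the same continuity/finite-support grounds invoked for Theorem~2, with the saving observation that the inner probability vanishes once $\ps/(r_0^\alpha(e^t-1))-\sigma^2\le0$, i.e.\ for all sufficiently large $t$, which simultaneously secures convergence of the outer $t$-integral. A minor bookkeeping point is that the symbols $P_0$ and $1-P_0$ standing in front of the two expectations in \eqref{eq:222} should be read as the $R_0$-conditional probabilities $\Prob(\NI=0\mid R_0)$ absorbed inside the respective $r_0$-integrals, consistently with the convention already used in \eqref{eq:12} and \eqref{eq:29}.
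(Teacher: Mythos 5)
Your proposal is correct and follows essentially the same route as the paper's Appendix~D: substituting $G_0\to1$, applying the tail-integral identity $\mathbb{E}[X]=\int_{t>0}\Prob(X>t)\,dt$ to reduce the rate to $\Prob\bigl(0<I<\ps/(r_0^\alpha(e^t-1))-\sigma^2\bigr)$, and reusing the Parseval--Plancherel inversion of $\mathcal{L}_I$ from Theorem~2, with the noise-only term obtained by setting $I=0$. Your added remarks on Fubini and on reading $P_0$ as the $R_0$-conditional probability are sensible refinements of, not departures from, the paper's argument.
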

\begin{proof}
See Appendix~D.
\end{proof}
{In case of non-fading interfering channels, the Laplace transform in \eqref{eq:2222} can be calculated from Corollary~3.} 
\begin{figure}
    \includegraphics[trim = 5mm 0mm 5.5cm 17cm, clip,width=\textwidth]{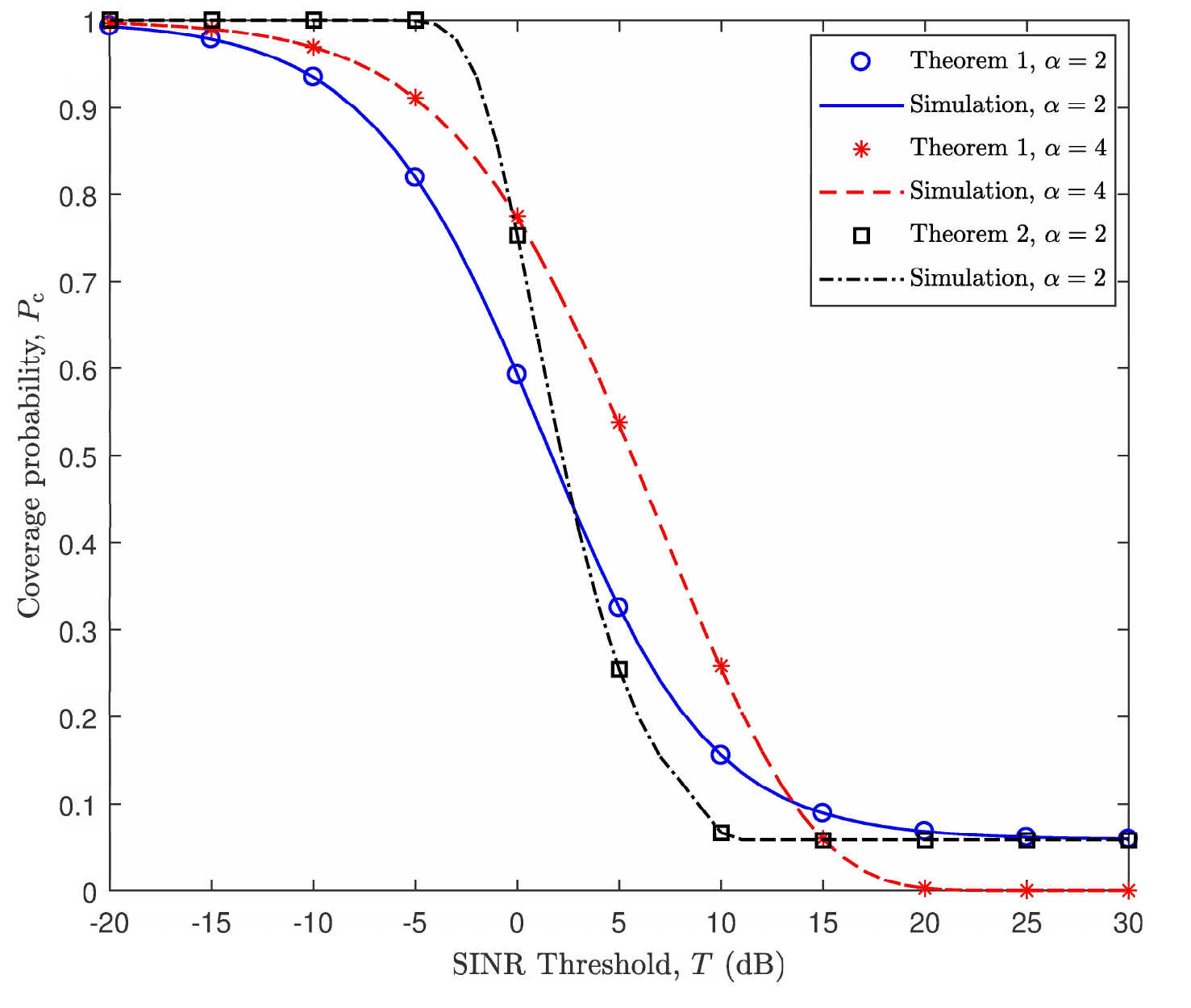}
    \caption{Verification of coverage probability expressions with simulations for $N=720$ satellites. The path loss exponent ($\alpha$) is set to 2 and 4.}
    \label{fig:3}
\end{figure}
\section{Numerical Results}

In this section, we verify our expressions for coverage probability and average data rate by comparing them with the results of Monte Carlo simulations. { We obtain a new parameter---effective number of satellites ($\Neff$)---to compensate for the effect of uneven distribution of satellites along different latitudes.} We then illustrate the two performance metrics in terms of different network design parameters such as the number of frequency channels and the satellite altitude.

For numerical verification, we calculate the simulated coverage probability in Monte Carlo manner as follows. First, we randomly place $N$ satellites with uniform distribution on a sphere centered at Earth's center and distanced $\rmin$ from its surface. At the same time, we also model the channels between the satellites and the user by exponential random variables (to model Rayleigh fading) with unit mean value while, to model a non-fading environment, we set the channel gains to one. Then we calculate the SINR at the receiver and compare it with a pre-defined threshold value to evaluate coverage probability. We repeat this experiment for a large number of realizations to obtain the averaged performance metrics. Finally, we compare these numerical results with the analytical expressions in Theorems 1--4 to confirm our analysis.

{For producing the numerical results, the transmitted power from all satellites is set to $\ps=\ppi=10$~W. The assumption corresponds to the case where all satellites are equipped with omni-directional antennas. The theory is applicable also to other antenna patterns by adjusting different power levels for the serving and interfering satellites accordingly to take into account that the serving satellite's transmission would be directed to the user while all interfering transmissions are sent to other, arbitrary and independent, directions. The noise power is assumed to be $\sigma^2=-98$~dBm.} The number of channels is considered to be $K=20$ unless otherwise stated. 

{\subsection{Corroboration of Theorems by Simulations}

In this subsection, we validate the coverage probability and the data rate derived in Theorems 1--4 through Monte Carlo simulations. We considered static propagation or Rayleigh fading for serving channels and general fading for interference.}

\begin{figure}

         \includegraphics[trim = 5mm 0mm 5.5cm 17cm, clip,width=\textwidth]{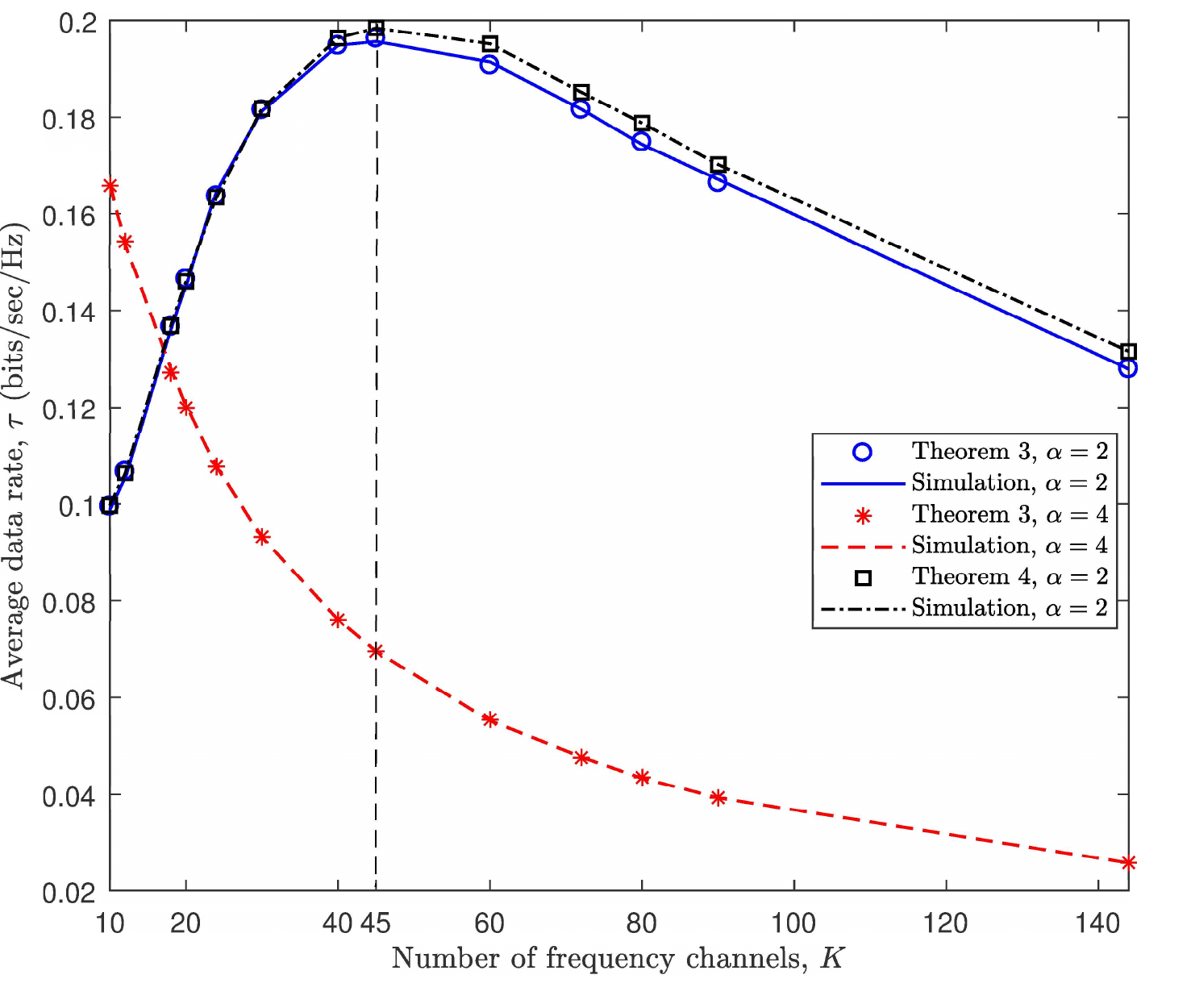}
        \caption{Effect of the number of frequency bands ($K$) on average achievable rate with Rayleigh and non-fading channels.}
        \label{fig:4}
\end{figure}

Figure~\ref{fig:3} demonstrates the coverage probability for different threshold values assuming $\alpha$ is 2 and 4. The parameters $N$ and $\rmin$ are set to $720$ and $1200$ km, respectively. We find that our theoretical results are perfectly in line with the simulations, which confirms the correctness of the derivations. {The transition of coverage probability from one to zero occurs variously depending on path loss exponent and the type of fading. For $\alpha=2$, the coverage probability saturates when $T>20$~dB which corresponds to the case when the number of interfering satellites is considerable. As a result, the $\SINR$ and, consequently, the second term in Theorem~1 approach zero. For larger path loss exponents, the effect of interference becomes insignificant and the transition from one to zero includes no saturation mode since the difference between $\SNR$ and $\SINR$ becomes less dominant. All curves tend to zero for thresholds greater than 80~dB although not shown in the figure.}

{Figure~\ref{fig:4} shows average achievable data rate versus the number of frequency bands for Rayleigh fading and non-fading environments. The behavior of the curves can be justified according to two contradictory effects of increasing the number of orthogonal channels on data rate. Allocating more orthogonal channels improves data rate by mitigating the interference but, at the same time, degrades it by making only a portion of the whole band available for each group of satellites. 
For $\alpha=2$, the former effect dominates for $K<45$ as by increasing $K$, data rate increases accordingly, while for $K>45$, the latter effects overcomes the interference elimination effect and data rate starts falling.

The effect of increasing the number of channels on the reduced available frequency band dominates the SINR enhancement for $\alpha=4$ since the largest average data rate for $\alpha=4$ corresponds to the lowest number of frequency channels. This is due to the fact that the larger path loss exponents make the interfering satellites less effective since they transmit from a farther distance. As it can be observed in the figure, for AWGN channel and $\alpha=2$, the behaviour of the data rate is slightly affected by the fading model.}

\begin{figure}
    \includegraphics[trim = 5mm 0mm 5.5cm 17cm, clip,width=\textwidth]{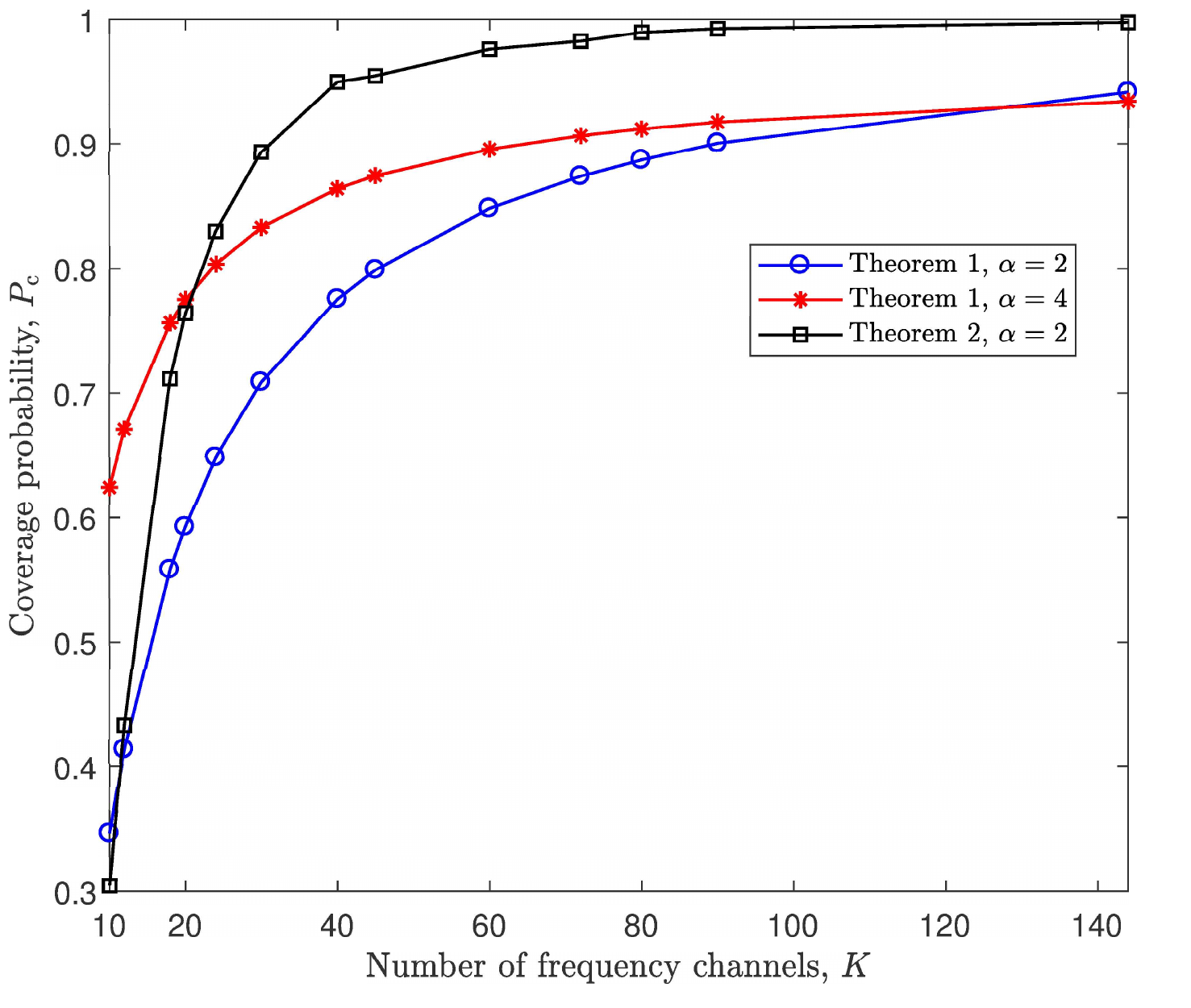}
    \caption{Effect of frequency reuse on coverage probability for different propagation environments and path loss exponents.}
    \label{fig:8}
\end{figure}

The effect of frequency reuse on coverage probability can be determined from Fig.~\ref{fig:8}. The channel allocation policy is such that the channel corresponding to the nearest satellite to the user will be selected. As can be observed, the coverage probability rises with increasing the number of frequency bands due to interference mitigation, and all three curves present the same behaviour. {However, the effect of the parameter $K$ is more significant for the smaller path loss exponent as it affects the interference level more significantly. 
The coverage performance is superior for non-fading environment due to better quality of the serving signal. It can be interpreted from Figs.~\ref{fig:4} and \ref{fig:8} that the coverage probability is an increasing function of the number of channels, i.e., the higher number of frequency channels will result in better coverage probability, while there is an optimum number of channels that maximizes the data rate. Therefore, the number of frequency channels should be compromised according to the performance demands of the intended constellation.}

{\subsection{Effective Number of Satellites}}

In this subsection, we provide analysis of deterministic Walker satellite constellations with different inclination angles which are expected to provide global Internet broadband services to users by 720 LEO satellites. In Fig.~\ref{fig:5}, the coverage probability of the Walker constellation, with inclination angles, $\phii$, of $90^\circ$ (polar), $70^\circ$ and $40^\circ$ are plotted for comparison to a random constellation. The user's latitude, denoted by $\phiu$, is assumed to be $30^\circ$.
The slight difference of coverage probability in random and Walker constellations is due to the fact that the latitudal density of satellites in deterministic constellations is not uniform, as there are more satellites around their inclination angle (i.e., the latitude limits) and less in the equatorial regions. Consequently, the coverage probability changes as we move from the equator to the latitude limits. 

\begin{figure}
    \includegraphics[trim = 5mm 0mm 5.4cm 17cm, clip,width=\textwidth]{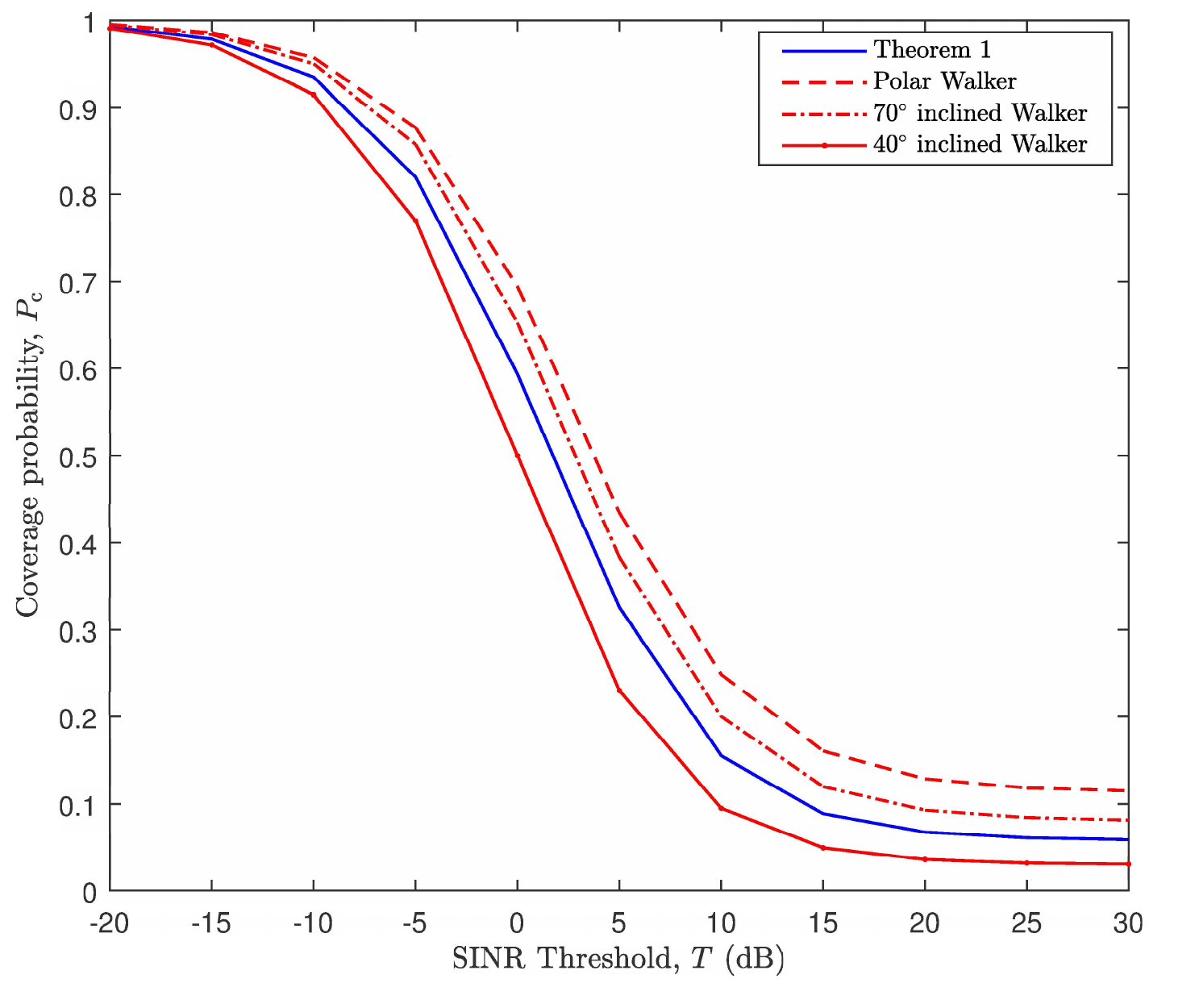}
    \caption{ Comparison of Theorem 1 with actual polar and inclined constellations. The user's latitude is set to $30^\circ$. }
    \label{fig:5}
\end{figure}

\begin{figure}
    \includegraphics[trim = 5mm 0mm 5.4cm 17cm, clip,width=\textwidth]{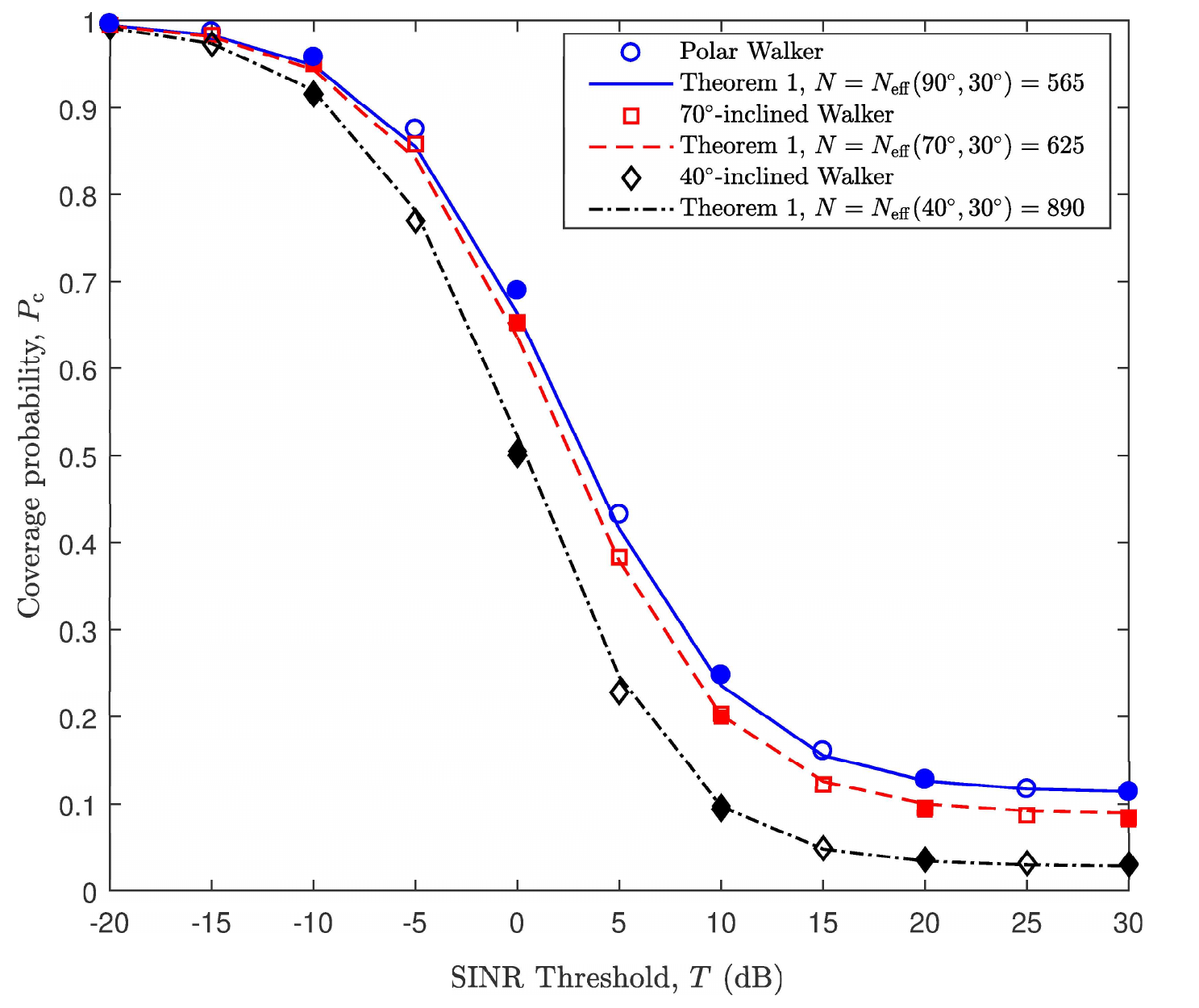}
    \caption{{Effect of using $\Neff$ in Theorem 1 on providing a better matching between the random and practical constellations. Filled points have been used in mean absolute error minimization process in order to obtain $\Neff$. }}
    \label{fig:7}
\end{figure}

{In order to compensate for uneven latitudal density, we introduce a new parameter, entitled as the effective number of satellites, and denote it by $\Neff(\phii,\phiu)$, which is the number of satellites for a random uniform constellation that corresponds to the same coverage probability as for a practical non-uniform constellation with inclination angle $\phii$ and at the user latitude $\phiu$. This parameter is approximated by mean absolute error minimization between the coverage probability given in Theorem 1 and the results from practical Walker constellation for a few threshold values on different latitudes. The approximated value for $\Neff$ is then refined to give the best matching between random and practical constellation based on the target performance metric. Although we used some of the values from simulated results to obtain $\Neff$, the same $\Neff$ can be used for other system parameters, i.e., path loss exponent and/or SINR threshold values. Likewise, the acquired $\Neff$ can be also used for a constellation with different total number of satellites by linearly scaling it accordingly.

\begin{figure}
    \includegraphics[trim = 4.5mm 0mm 4.6cm 17cm, clip,width=\textwidth]{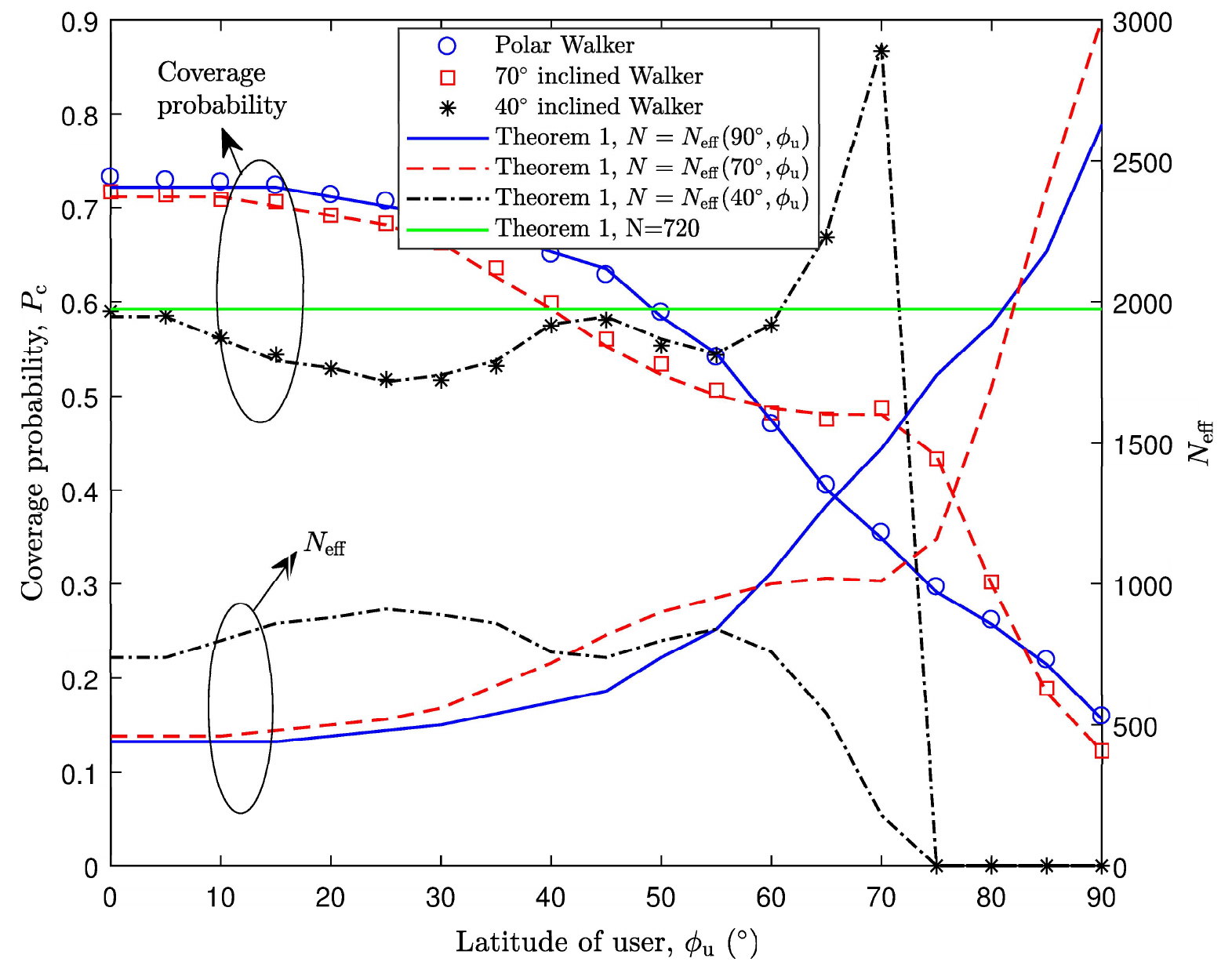}
    \caption{Effect of latitude on coverage probability. The markers for coverage probability are plotted by simulation of a Walker constellation for $90^\circ$, $70^\circ$ and $40^\circ$ inclined orbits. The right axis depicts $\Neff(\phii,\phiu)$ for different latitudes. The lines are plotted by applying $\Neff(\phii,\phiu)$ in Theorem 1. }
    \label{fig:6}
\end{figure}

\begin{figure}
         \includegraphics[trim = 4.5mm 0mm 5cm 17cm, clip,width=\textwidth]{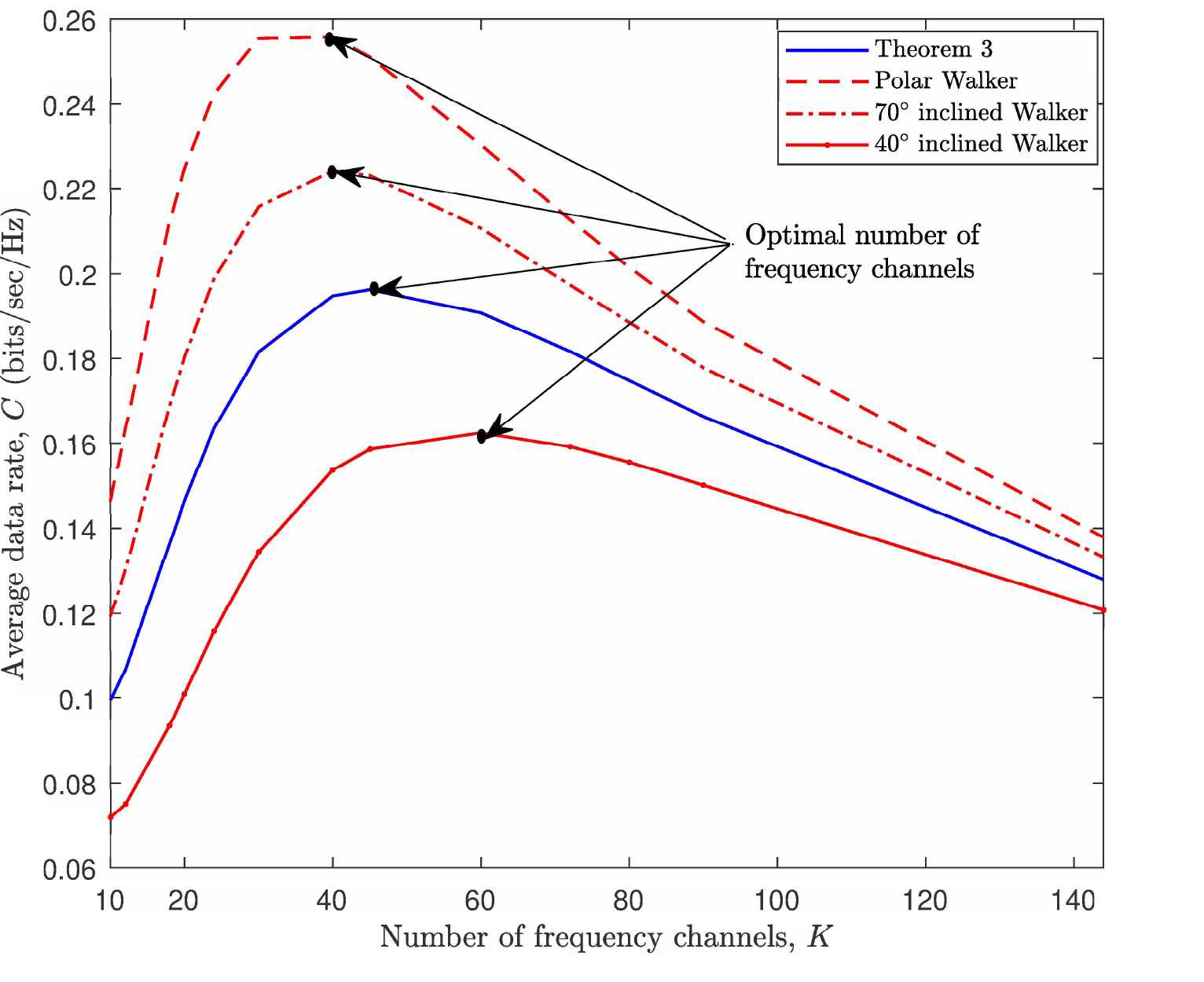}
        \caption{{Comparison of Theorem 3 with actual polar and inclined constellations. The user's latitude is set to $30^\circ$.}}
        \label{fig:9}
\end{figure}

In Fig.~\ref{fig:7}, we eliminated the deviation between Walker and random constellations (cf.\ Fig.~\ref{fig:5}) by compensating the non-uniform latitudal density. The solid markers on the curves were chosen from practical constellation in order to minimize the mean absolute error in coverage probability of a random constellation. As it can be seen, although only a few points are employed for fitting, the selected $\Neff$ also holds for other values that had no contribution on the minimization process. In other words, applying $\Neff$ in theorems provided in this paper will lead to more precise results for many system parameters and performance metrics by taking into account only a few input data from the actual network.}

In Fig.~\ref{fig:6}, the effect of latitude of user on coverage probability is shown for three practical constellations.  
In this figure, we have two groups of data that demonstrate the coverage probability. The first group (depicted by markers) is obtained by simulation of a Walker constellation for polar, $70^\circ$ and $40^\circ$ inclined orbits. For a polar constellation, the number of satellites in the view increases monotonically from the equator to poles. As a result, the coverage probability declines as the user moves from the equator to poles due to intensifying interference. On the other hand, for the other two inclination angles, the number of satellites in the view varies non-uniformly due to the network geometry which results in coverage alternation for different latitudes. {The zero coverage probability for user latitudes greater than $75^\circ$ in $40^\circ$-inclined Walker results from the fact that there are no visible satellites for those latitudes.}

The second group of data (depicted by lines) in Fig.~\ref{fig:6} is generated using Theorem 1 with $\Neff(\phii,\phiu)$ satellites. As can be seen in the figure, coverage probability in Theorem~1 for $\Neff(\phii,\phiu)$ results in the same coverage from Walker constellation for the given $\phiu$. Therefore, using $\Neff(\phii,\phiu)$ in Theorem 1, the effect of uneven distribution of satellites along different latitudes is compensated and the coverage probability can be obtained for any actual constellation. Most importantly, we can conclude that the approximation error in modeling deterministic constellations as random ones is actually rather minimal when the varying satellite density at different latitudes is taken into account.

\begin{figure}
         \includegraphics[trim = 4.5mm 0mm 4.6cm 17cm, clip,width=\textwidth]{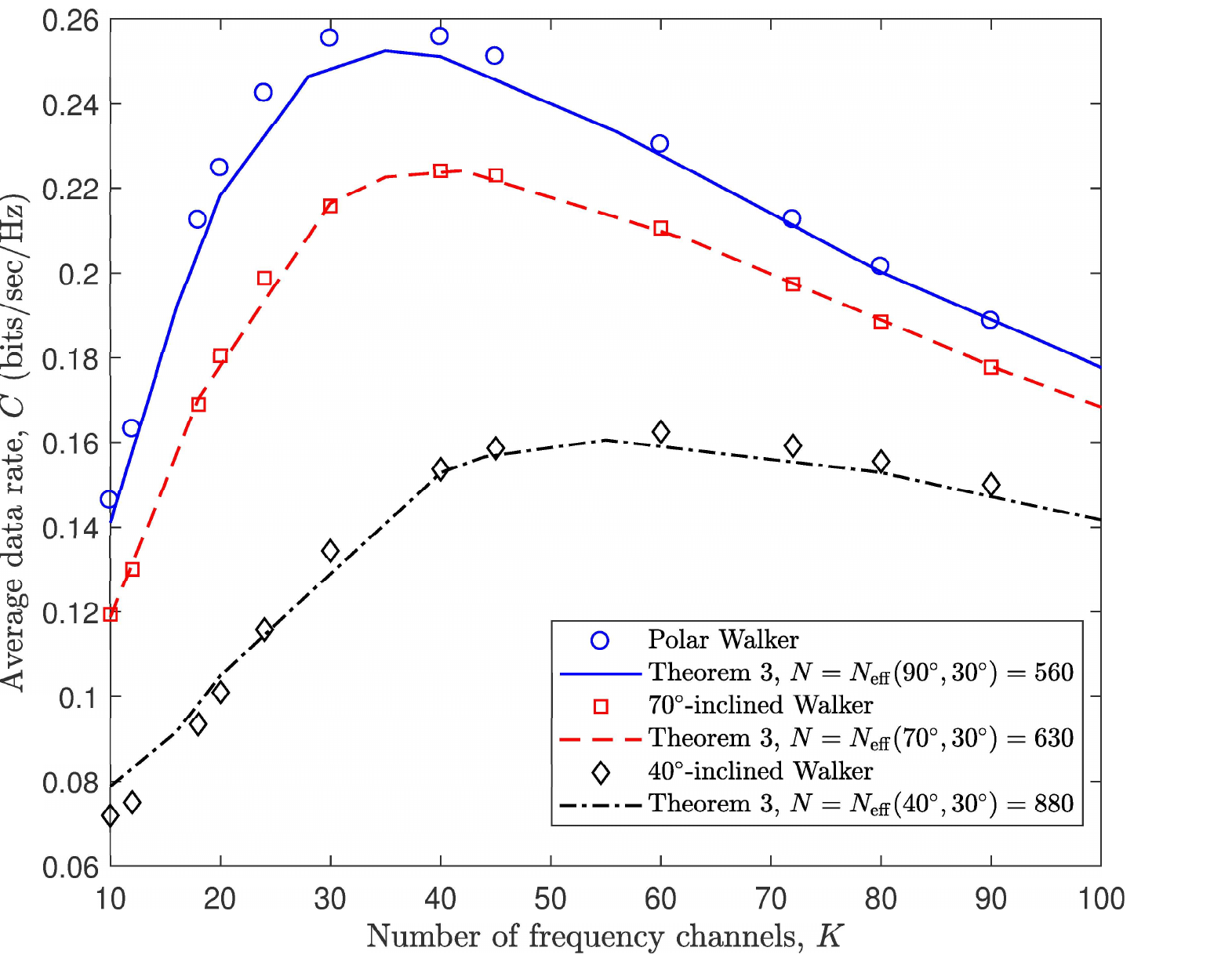}
        \caption{{Effect of using $\Neff$ in Theorem 3 on providing a better matching between the random and practical constellations.}}
        \label{fig:10}
\end{figure}

The data rate of a Walker satellite constellation for different inclination angles to compare with a random constellation (viz.\ Theorem 3) is depicted in Fig.~\ref{fig:9}. As mentioned for Fig.~\ref{fig:5}, the difference between the random and Walker constellation is mainly due to the non-uniform density of satellites in Walker constellation. { The optimal number of frequency channels that can be observed in this figure for each of the constellations increases by decreasing the inclination angle. Figure~\ref{fig:10} depicts the effect of applying $\Neff$ to Theorem 3 in eliminating the observed deviation in Fig.~\ref{fig:9}. The selected $\Neff$ values are slightly different from those in Fig.~\ref{fig:7} since they are refined to create the best possible matching between the data rate of the random and practical constellation.}

Figure~\ref{fig:11} is a counterpart for Fig.~\ref{fig:6} to illustrate the effect of latitude of the user on achievable data rate and $\Neff(\phii,\phiu)$. In this figure, $\Neff(\phii,\phiu)$ is the number of satellites in a random constellation that corresponds to the same data rate as a Walker constellation for the given $\phiu$. Due to different values of $\Neff(\phii,\phiu)$ in Figs.~\ref{fig:6} and~\ref{fig:11}, we can realize that this value is not only a function of geometrical characteristics of the constellation but also depends slightly on the performance metric. However, the compensation of latitudal density is still applicable to varying system parameters that are different from those used for fitting theoretical expressions to simulations.
\vspace{20pt}

\subsection{The Effect of Satellite Altitude}

Coverage probability for different altitudes is then depicted in Fig.~\ref{fig:12}. The figure is plotted using $\Neff(\phii,\phiu)$ for Theorem~1. {In this figure, we assumed that the total number of satellites varies with altitude so that $\Neff$ remains constant irrespective of altitude. This balances between a practical constellation design in which the number of satellites (when aiming at covering economically the entire globe) typically decreases with increasing altitude and the direct relationship between altitude and the effective number of satellites.}
It can be observed from Fig.~\ref{fig:12} that, for practical LEO altitudes, the coverage probability declines when altitude increases. Within Earth's atmosphere, there is an impractical optimal altitude, which moves even lower as the number of satellites increases.

Data rate for different altitudes is plotted in Fig.~\ref{fig:13}. The figure is created using Theorem~3 with $\Neff(\phii,\phiu)$ that corresponds to a specific latitude $\phiu$ for Walker constellations (polar and inclined). {The same assumption as for Fig.~\ref{fig:12} is also made regarding the invariance of $\Neff$ w.r.t.\ altitude.} This serves to show that, having $\Neff(\phii,\phiu)$ at hand, we can use the expressions in Theorems 3 or 4 to analyze the data rate for any given satellite constellation and the approximation error in modeling a deterministic constellation as a binomial point process is insignificant. The same as for Fig.~\ref{fig:12}, it can be seen that the peak value of data rate does not exist within the LEO practical altitude range outside atmospheric drag.

\begin{figure}
         \includegraphics[trim = 5mm 1mm 4.3cm 16.3cm, clip,width=\textwidth]{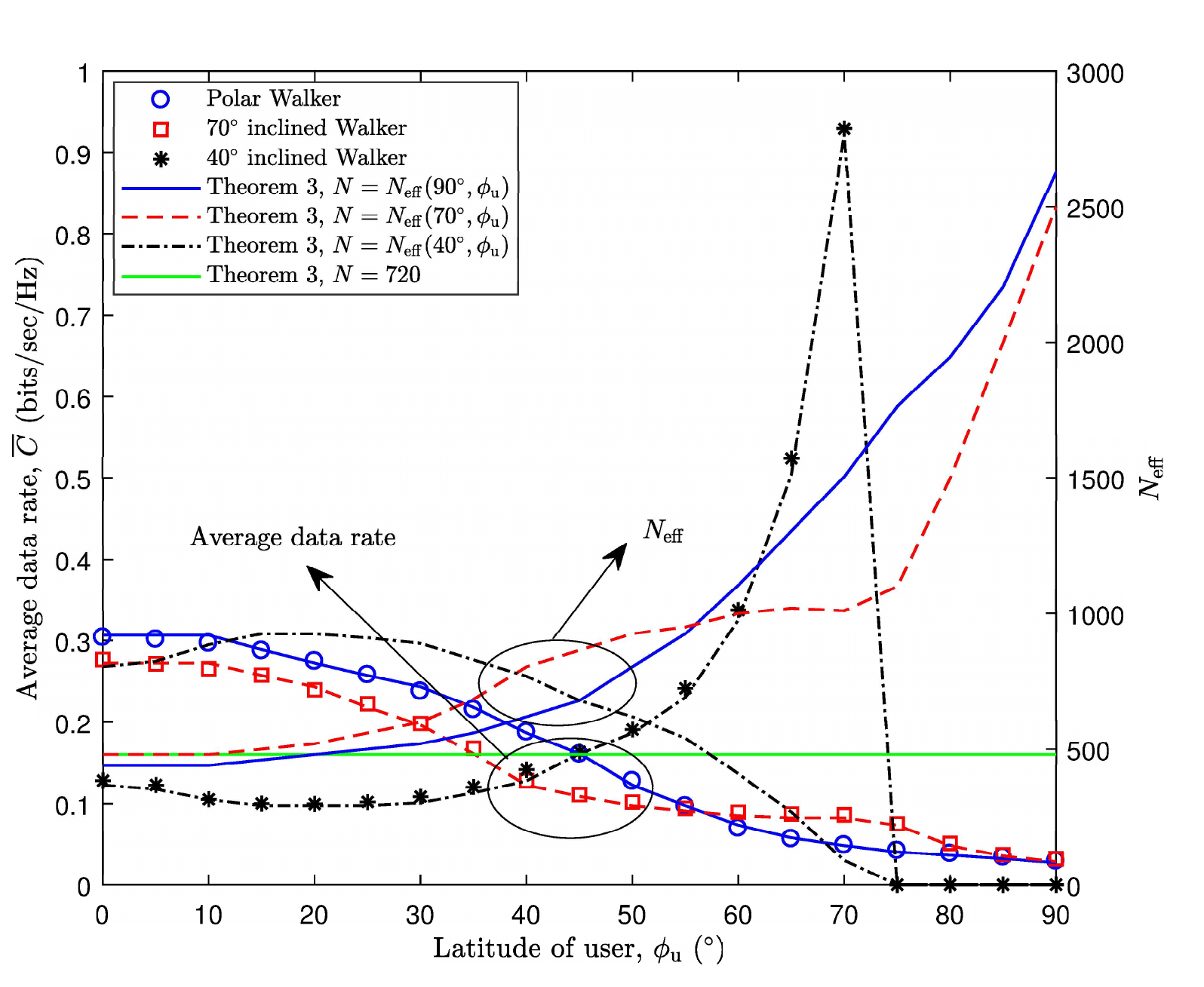}
        \caption{Effect of latitude on average data rate and $\Neff(\phii,\phiu)$. The markers for data rate are plotted by simulation of a Walker constellation for $90^\circ$, $70^\circ$ and $40^\circ$ inclined orbits. The lines are plotted by applying $\Neff(\phii,\phiu)$ in Theorem 3. }
        \label{fig:11}
\end{figure}

\section{Conclusions}
 
In this paper, we presented a tractable approach for downlink coverage and rate analysis of low Earth orbit satellite networks. Using the concept of stochastic geometry, we modeled the satellite network as a binomial point process. We then applied this model to obtain exact expressions for coverage probability and data rate of an arbitrary user in terms of network parameters and the Laplace transform of interference power. The performance metrics of the random and real constellation match with each other almost perfectly; there is only a slight deviation between them which can be compensated by taking into account the effect of uneven satellite distribution along different latitudes. A frequency reuse approach was also applied to make the network spectral efficiency suitable for commercial operation. Its effect on data rate is more ambivalent than on coverage probability. The proposed framework in this paper paves the way for accurate analysis and design of the future dense satellite networks. The scalable results presented here are not only applicable to a single constellation but also to massive satellite networks that merge several different constellations.

\begin{figure}
    \includegraphics[trim = 5mm 1mm 4.6cm 16.6cm, clip,width=\textwidth]{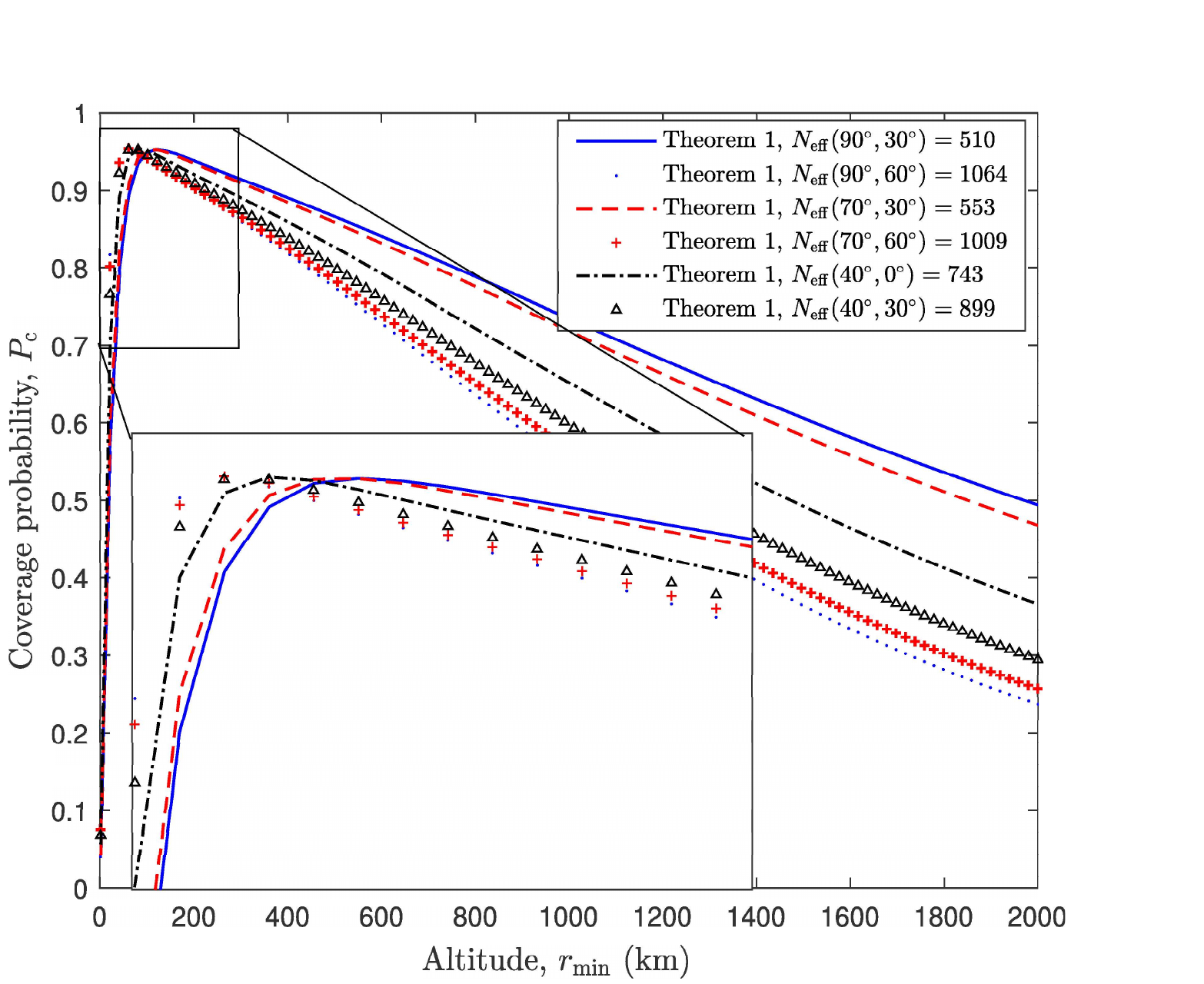}
    \caption{Coverage probability for different altitudes. Curves are plotted using Theorem 1 for different Walker constellations in $\phiu=0^\circ$, $30^\circ$, and $60^\circ$ assuming $N=\Neff(\phii,\phiu)$.}
    \label{fig:12}
\end{figure}

\appendix

\subsection{Proof of Lemma~1}
From basic geometry, the CDF of the surface area of the shaded spherical cap $A_{\text{cap}}$, {formed by any satellite at distance $R$ from the user}, in Fig.~2 is $F_{A_{\text{cap}}}\left(A_i\right)=\frac{A_i}{4\pi (\rEarth+\rmin)^2}$. To find the distribution of $R$, we need to find a relationship between $A_{\text{cap}}$ and $R$, when know that
\begin{align}
\label{eq:4}
A_{\text{cap}} &= \pi \left(a^2+b^2\right),\\
\label{eq:5}
R^2 &= \left(\rmin-b\right)^2+a^2,
\end{align}
{
for which $a$ and $b$ are given in Fig.~2.}
Combining \eqref{eq:4} and \eqref{eq:5}, we have   
\begin{align}
\label{eq:6}
\nonumber
R^2 &= \rmin^2-2\rmin b+b^2+a^2=\rmin^2-2\rmin b+\frac{A_{\text{cap}}}{\pi}\\
&= \rmin^2-2\rmin (\rEarth+\rmin)\left(1-\cos\theta\right)+\frac{A_{\text{cap}}}{\pi}.
\end{align}
Using the formula for the surface area of a spherical cap, i.e., $A_{\text{cap}}=2\pi (\rEarth+\rmin)^2(1-\cos\theta)$, we obtain
\begin{align}
\label{eq:7}
\nonumber
R^2 &= \rmin^2-2\rmin (\rEarth+\rmin)\left(\frac{A_{\text{cap}}}{2\pi (\rEarth+\rmin)^2}\right) + \frac{A_{\text{cap}}}{\pi}\\
&= \rmin^2+\frac{A_{\text{cap}}}{\pi}\left(1-\frac{\rmin}{\rEarth+\rmin}\right).
\end{align}
The CDF can then be deduced as follows:
\begin{align}
\label{eq:8}
\nonumber
\Prob\left(R<r\right)
&=\Prob\left(R^2<r^2\right)\\\nonumber
&=\Prob\left(\rmin^2+\frac{A_{\text{cap}}}{\pi}\left(1-\frac{\rmin}{\rEarth+\rmin}\right)<r^2\right)\\\nonumber
&=\Prob\left(A_{\text{cap}}<\frac{\pi \left(r^2-\rmin^2\right)}{1-\frac{\rmin}{\rEarth+\rmin}}\right)\\\nonumber
&=\frac{\pi \left(r^2-\rmin^2\right)}{\left(1-\frac{\rmin}{\rEarth+\rmin}\right)4\pi \left(\rEarth+\rmin\right)^2}\\
&=\frac{r^2-\rmin^2}{4\rEarth(\rEarth+\rmin)}.
\end{align}
Finally, the corresponding PDF can be derived by differentiating \eqref{eq:8} with respect to $r$.

\begin{figure}
         \includegraphics[trim = 5mm 0mm 6cm 17.5cm, clip,width=\textwidth]{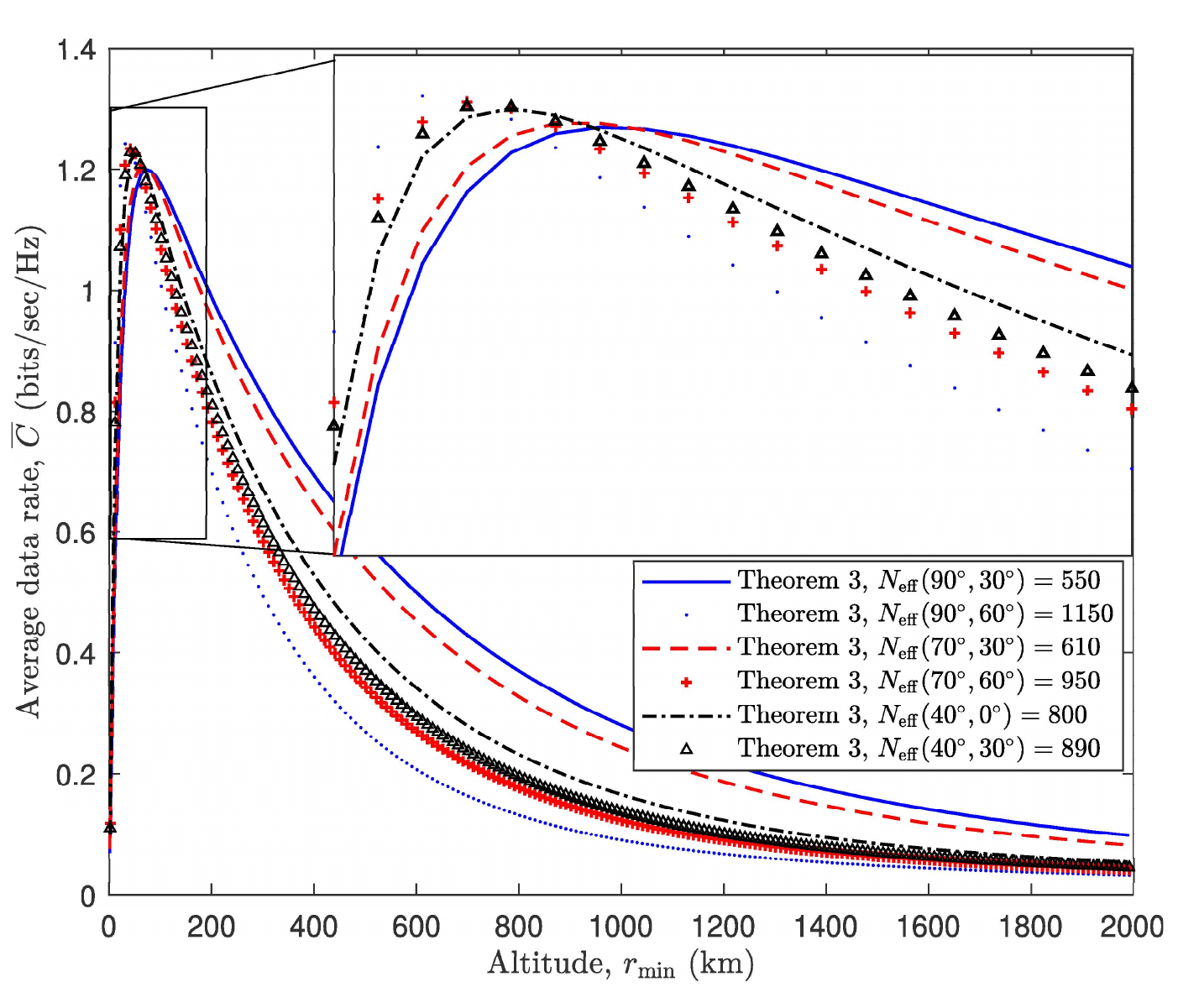}
       \caption{Average rate for different altitudes. Curves are plotted using Theorem 3 for different Walker constellations in $\phiu=0^\circ$, $30^\circ$, and $60^\circ$ assuming $N=\Neff(\phii,\phiu)$.}
      \label{fig:13}
\end{figure}

\subsection{Proof of Lemma~5}
In this Appendix, we provide the proof to the expression of Laplace function for general fading channels. Using the definition of the Laplace transform yields

\begin{align}
\label{eq:23}
\nonumber
\mathcal{L}_{I}(s)&\triangleq\mathbb{E}_{I}\left[e^{-sI}\right]=\mathbb{E}_{\NI, R_n, G_n}\left[\exp{\left(-s\sum_{n=1}^{\NI}\ppi G_nR_n^{-\alpha}\right)}\right]\\\nonumber
&=\mathbb{E}_{\NI, R_n, G_n}\left[\prod_{n=1}^{\NI}\exp{\left(-s\ppi G_nR_n^{-\alpha}\right)}\right]\\\nonumber
&\stackrel{(a)}= \mathbb{E}_{\NI, R_n}\left[\prod_{n=1}^{\NI}\mathbb{E}_{G_n}\left[\exp{\left(-s\ppi G_nR_n^{-\alpha}\right)}\right]\right]\\\nonumber
&\stackrel{(b)}= \mathbb{E}_{\NI}\Big[\prod_{n=1}^{\NI}\frac{2}{\rmax^4/\rmin^2-r_0^2}\\\nonumber
&\times \int_{r_0}^{\rmax}\mathbb{E}_{G_n}\left[\exp{\left(-s\ppi G_n r_n^{-\alpha}\right)}\right]r_n\,dr_n\Big]\\\nonumber
&\stackrel{(c)}= \sum_{\nI=1}^{\frac{N}{K}-1} \Bigg(\binom{\frac{N}{K}-1}{\nI}\PI^{\nI}(1-\PI)^{\frac{N}{K}-1-\nI}\Big(\frac{2}{\rmax^4/\rmin^2-r_0^2}\\\nonumber
&\times {\int_{r_0}^{\rmax}\mathbb{E}_{G_n}\left[\exp{\left(-s\ppi G_n r_n^{-\alpha}\right)}\right]\,r_n\,dr_n\Big)}^{\nI}\Bigg)\\\nonumber
&\stackrel{(d)}= \sum_{\nI=1}^{\frac{N}{K}-1} \Bigg(\binom{\frac{N}{K}-1}{\nI}\PI^{\nI}(1-\PI)^{\frac{N}{K}-1-\nI}\\
&\times {\Big(\frac{2}{\rmax^4/\rmin^2-r_0^2}\int_{r_0}^{\rmax}\mathcal{L}_{G_n}\left(s\ppi r_n^{-\alpha}\right)\,r_n\,dr_n\Big)}^{\nI}\Bigg)
\end{align}
where (a) follows from the i.i.d.\ distribution of $G_n$ and its further independence from $\NI$ and $R_n$,  (b) is obtained using the interfering distance distribution from \eqref{eq:14}, (c) is the averaging over the binomial random variable $\NI$ with the success probability $\PI$, which is given in Lemma~4, {and (d) is the substitution from the definition of Laplace function. }

\subsection{Proof of Theorem 3}
We only provide the proof for \eqref{eq:33} herein, while the proof for \eqref{eq:37} follows the same approach as well. In particular, 
\begin{align}
\label{eq:24}
\nonumber
&\mathbb{E}_{I,G_0,R_0}\left[\log_2\left(1+\SINR\right)|\NI>0\right]\\\nonumber
&=c_0\int_{\rmin}^{\rmax}\mathbb{E}\left[\ln\left(1+\frac{\ps G_0r_0^{-\alpha}}{I+\sigma^2}\right)\Big| \NI>0\right]\\\nonumber
&\hspace{45pt}\times\left(1-\frac{r_0^2-\rmin^2}{4\rEarth(\rEarth+\rmin)}\right)^{N-1}r_0\,dr_0\\\nonumber
&\stackrel{(a)}=c_0\int_{\rmin}^{\rmax}\int_{t>0}\Prob\left(\ln\left(1+\frac{\ps G_0r_0^{-\alpha}}{I+\sigma^2}\right)>t\Big| \NI>0\right)dt\\\nonumber
&\hspace{65pt}\times\left(1-\frac{r_0^2-\rmin^2}{4\rEarth(\rEarth+\rmin)}\right)^{N-1}r_0\,dr_0\\\nonumber
&=c_0\int_{\rmin}^{\rmax}\int_{t>0}\Prob\left(G_0>\frac{r_0^\alpha}{\ps }\left(\sigma^2+I\right)\left(e^t-1\right)\Big| \NI>0\right)dt\\\nonumber
&\hspace{65pt}\times\left(1-\frac{r_0^2-\rmin^2}{4\rEarth(\rEarth+\rmin)}\right)^{N-1}r_0\,dr_0\\\nonumber
&=c_0\int_{\rmin}^{\rmax}\int_{t>0}\mathbb{E}_I\left[e^{-\frac{r_0^\alpha}{\ps }\left(I+\sigma^2\right)\left(e^t-1\right)}\Big | \NI>0\right]dt\\\nonumber
&\hspace{65pt}\times\left(1-\frac{r_0^2-\rmin^2}{4\rEarth(\rEarth+\rmin)}\right)^{N-1}r_0\,dr_0\\\nonumber
&=c_0\int_{\rmin}^{\rmax}\int_{t>0}e^{-\frac{r_0^\alpha}{\ps }\sigma^2 \left(e^t-1\right)}\mathbb{E}_I\left[e^{-\frac{r_0^\alpha}{\ps } \left(e^t-1\right)I}\right]dt\\
&\hspace{65pt}\times\left(1-\frac{r_0^2-\rmin^2}{4\rEarth(\rEarth+\rmin)}\right)^{N-1}r_0\,dr_0,
\end{align}
where $c_0=\frac{N}{2 \ln(2)\rEarth(\rEarth+\rmin)}$ and (a) follows from the fact that for a positive random variable $X$, $\mathbb{E}\left[X\right]=\int_{t>0}\Prob\left(X>t\right)dt$. 

\subsection{Proof of Theorem 4}
Assuming $G_n=1$, we have
\begin{align}
\label{eq:30}
\nonumber
&\mathbb{E}_{I,R_0}\left[\log_2\left(1+\SINR\right)|\NI>0\right]\\\nonumber
&=c_0\int_{\rmin}^{\rmax}\mathbb{E}_{I}\left[\ln\left(1+\frac{\ps r_0^{-\alpha}}{I+\sigma^2}\right)\Big|\NI>0\right]\\\nonumber
&\hspace{65pt}\times \left(1-\frac{r_0^2-\rmin^2}{4\rEarth(\rEarth+\rmin)}\right)^{N-1}r_0\,dr_0\\\nonumber
&= c_0\int_{\rmin}^{\rmax}\int_{t>0}\Prob\left(\ln\left(1+\frac{\ps r_0^{-\alpha}}{I+\sigma^2}\right)>t\Big|\NI>0\right)dt\\\nonumber
&\hspace{65pt}\times\left(1-\frac{r_0^2-\rmin^2}{4\rEarth(\rEarth+\rmin)}\right)^{N-1}r_0\,dr_0\\\nonumber
&= c_0\int_{\rmin}^{\rmax}\int_{t>0}\Prob\left(\ps >r_0^\alpha\left(\sigma^2+I\right)\left(e^t-1\right)dt\Big| \NI>0\right)\\\nonumber
&\hspace{65pt}\times\left(1-\frac{r_0^2-\rmin^2}{4\rEarth(\rEarth+\rmin)}\right)^{N-1}r_0\,dr_0\\\nonumber
&= c_0\int_{\rmin}^{\rmax}\int_{t>0}\Prob\left(0<I<\frac{\ps }{r_0^\alpha\left(e^t-1\right)}-\sigma^2\right)dt\\
&\hspace{65pt}\times\left(1-\frac{r_0^2-\rmin^2}{4\rEarth(\rEarth+\rmin)}\right)^{N-1}r_0\,dr_0,
\end {align}
where $c_0=\frac{N}{2 \ln(2)\rEarth(\rEarth+\rmin)}$. Following the same approach as in \eqref{eq:34}--\eqref{eq:34-3} and using Parseval--Plancherel formula will result in \eqref{eq:2222}. The proof for \eqref{eq:31} can be obtained easily using the same principles as the above and only by substituting $I=0$.

\bibliography{main_TwoColumn_revised} 
\bibliographystyle{IEEEtran}

\begin{IEEEbiography}[{\includegraphics[width=1in,height=1.25in,clip,keepaspectratio]{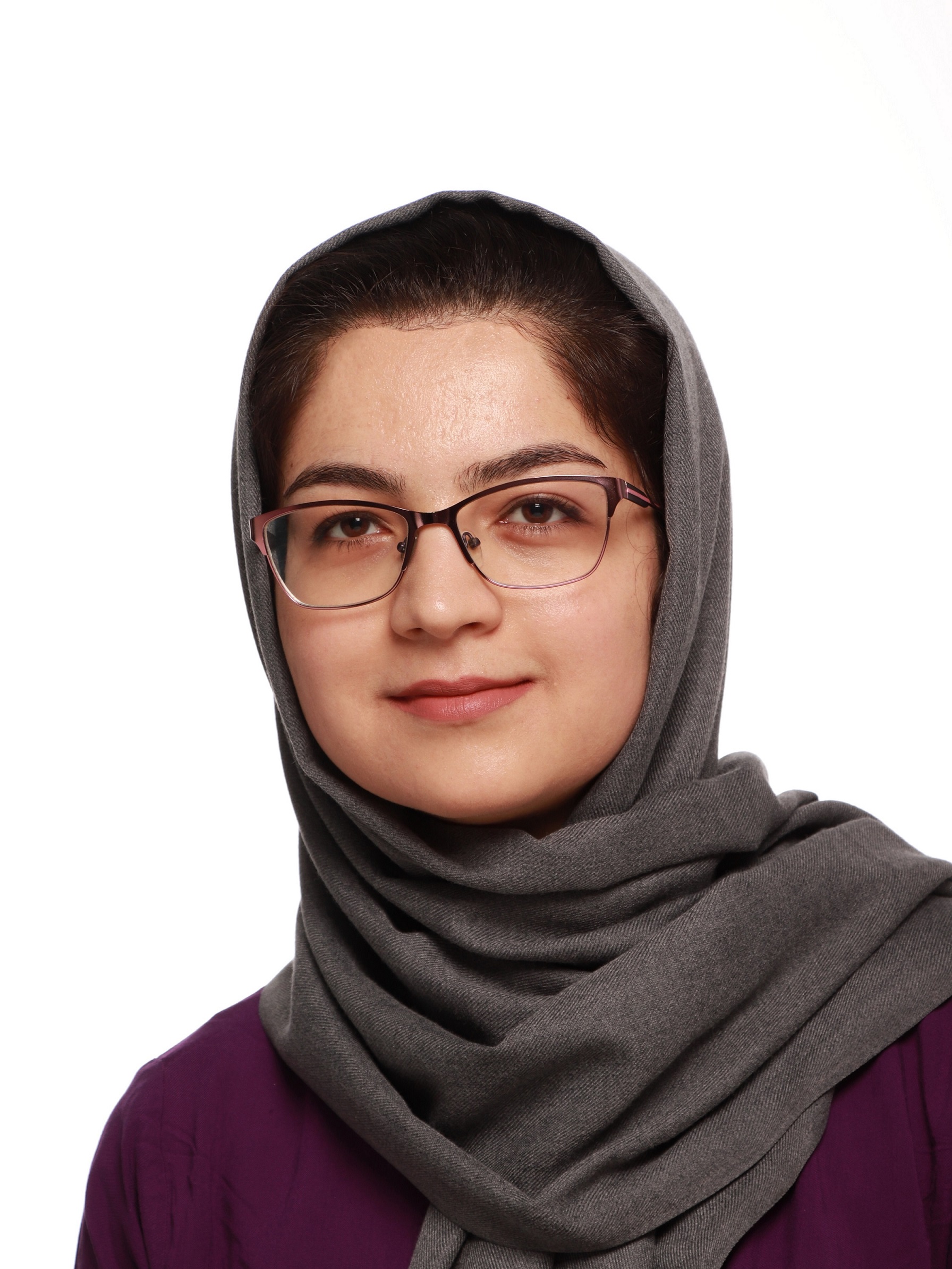}}]{Niloofar Okati}
received the B.E.\ degree in electrical engineering from Shiraz University of Technology, Iran, in 2013, and M.Sc.\ degree in communications engineering from Iran University of Science and Technology (IUST), Iran, in 2016. She is currently a Ph.D.\ researcher at the Faculty of Information Technology and Communication Sciences, Tampere University, Finland. Her research interests include stochastic geometry, wireless communication and satellite networks. She was one of the 200 young researchers in mathematics and computer science selected worldwide to participate in the 7th Heidelberg Laureate Forum (HLF), in 2019. 
\end{IEEEbiography}

\begin{IEEEbiography}[{\includegraphics[width=1in,height=1.25in,clip,keepaspectratio]{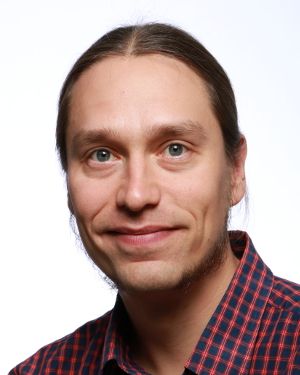}}]{Taneli Riihonen}(S'06--M'14)
received the D.Sc.\ degree in electrical engineering (with distinction) from Aalto University, Helsinki, Finland, in August 2014. He is currently an Assistant Professor (tenure track) at the Faculty of Information Technology and Communication Sciences, Tampere University, Finland. He held various research positions at Aalto University School of Electrical Engineering from September 2005 through December 2017. He was a Visiting Associate Research Scientist and an Adjunct Assistant Professor at Columbia University in the City of New York, USA, from November 2014 through December 2015. He has been nominated eleven times as an Exemplary/Top Reviewer of various IEEE journals and is serving as an Editor for \textsc{IEEE Wireless Communications Letters} since May 2017. He has previously served as an Editor for \textsc{IEEE Communications Letters} from October 2014 through January 2019. He received the Finnish technical sector's award for the best doctoral dissertation in 2014 and the EURASIP Best PhD Thesis Award 2017. His research activity is focused on physical-layer OFDM(A), multiantenna, relaying and full-duplex wireless techniques with current interest in the evolution of beyond 5G systems.
\end{IEEEbiography}

\begin{IEEEbiography}[{\includegraphics[width=1in,height=1.25in,clip,keepaspectratio]{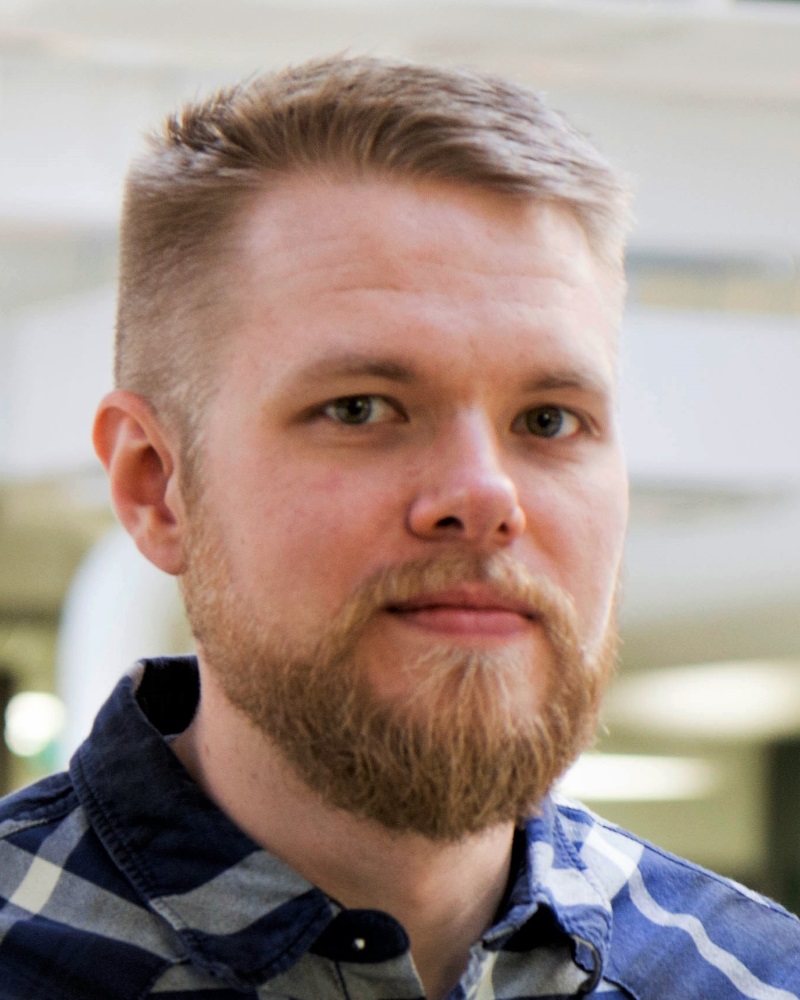}}]{Dani Korpi} received the M.Sc. and D.Sc. degrees (Hons.) in communications engineering and electrical engineering from Tampere University of Technology, Finland, in 2014 and 2017, respectively. Currently, he is a Senior Specialist with Nokia Bell Labs in Espoo, Finland. His doctoral thesis received an award for the best dissertation of the year in Tampere University of Technology, as well as the Finnish technical sector's award for the best doctoral dissertation in 2017. His research interests include inband full-duplex radios, machine learning for wireless communications, and beyond 5G radio systems.
\end{IEEEbiography}

\begin{IEEEbiography}[{\includegraphics[width=1in,height=1.25in,clip,keepaspectratio]{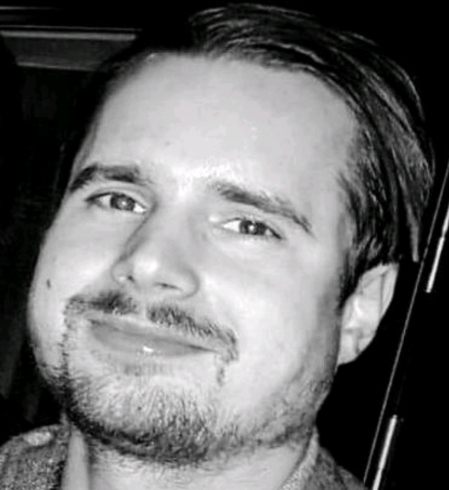}}]{Ilari Angervuori}  received the B.Sc.\ and M.Sc.\ degrees from the University of Helsinki, Finland, in 2016 and 2018. Since 2018, he has been with the School of Electrical Engineering at Department of Signal Processing and Acoustics, Aalto University, where he has been a Doctoral Candidate since 2019. His research interests are in the areas of satellite communications and wireless communication. He has a strong background in applied mathematics and applying stochastic geometry into areas of engineering is in his main focus.
\end{IEEEbiography}

\begin{IEEEbiography}[{\includegraphics[width=1in,height=1.25in,clip,keepaspectratio]{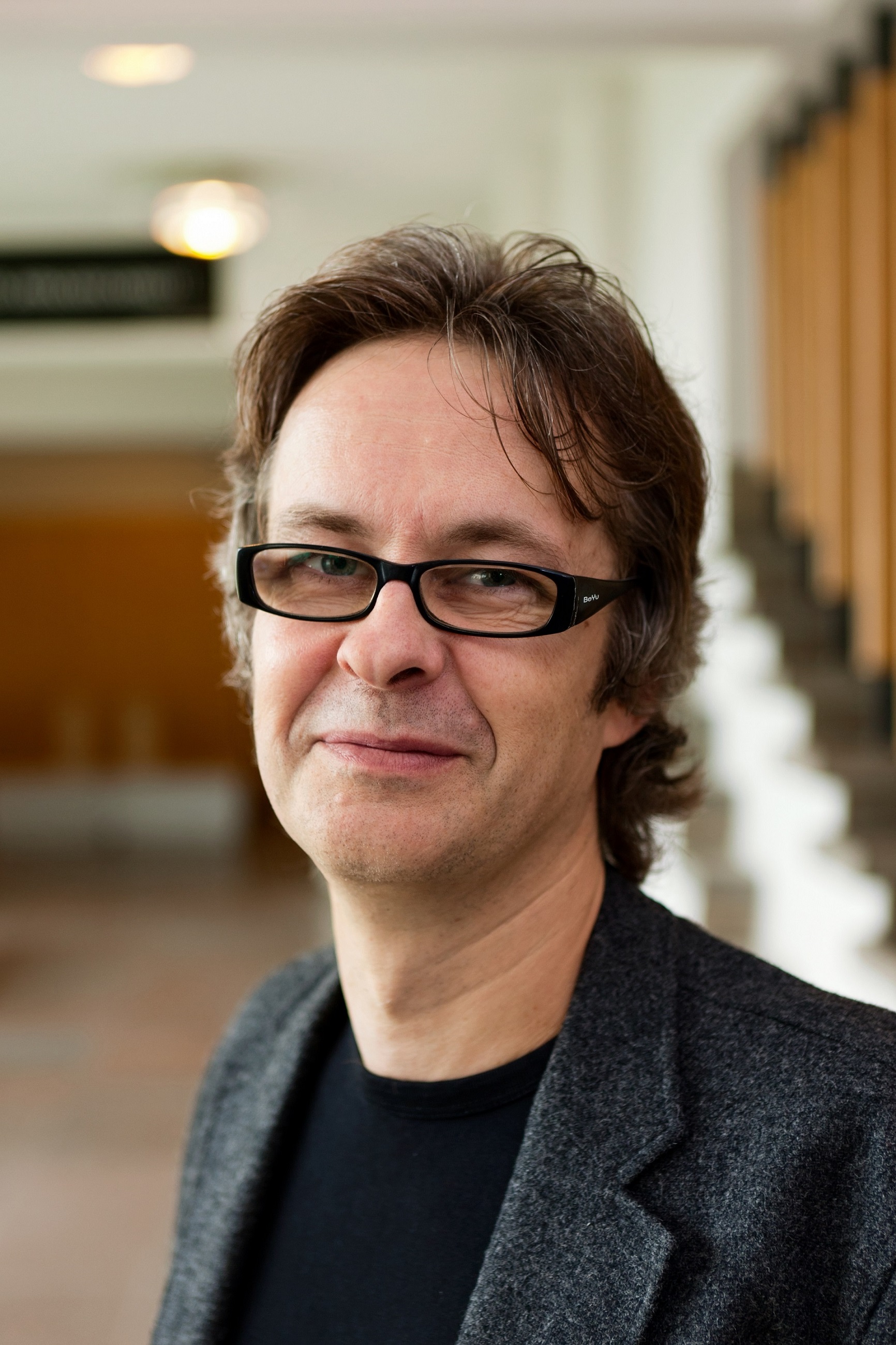}}]{Risto Wichman} received his M.Sc.\  and D.Sc.\ (Tech.) degrees in digital signal processing from Tampere University of Technology, Finland, in 1990 and 1995, respectively.  From 1995 to 2001, he worked at Nokia Research Center as a senior research engineer.  In 2002, he joined Department of Signal Processing and Acoustics, Aalto University School of Electrical Engineering, Finland, where he is a full professor since 2008.  His research interests include signal processing techniques for wireless communication systems.
\end{IEEEbiography}

\end{document}